%
%  $Description: Author guidelines and sample document in LaTeX 2.09$
%
%  $Author: ienne $
%  $Date: 1995/09/15 15:20:59 $
%  $Revision: 1.4 $
%

%\documentclass[times, 10pt,twocolumn]{article}%{ctexart}
\documentclass[times, 10pt]{article}%{ctexart}
\usepackage{latex8}
\usepackage{verbatim}
\usepackage{graphicx}
\usepackage{subfigure}
\usepackage[lined,boxed,commentsnumbered, ruled]{algorithm2e}
\usepackage{amsmath, amsthm, amssymb}
\newtheorem{thm}{Theorem}[section]
\newtheorem{cor}[thm]{Corollary}
\newtheorem{lem}[thm]{Lemma}

%\documentstyle[times,art10,twocolumn,latex8]{article}

%-------------------------------------------------------------------------
% take the % away on next line to produce the final camera-ready version
%\pagestyle{empty}

%-------------------------------------------------------------------------
\begin{document}

%\title{Approximate Discovery of Service Nodes with Round-robin Buddy Bloom Filters}
\title{Approximate Discovery of Service Nodes by Duplicate Detection in Flows}

\author{Zhou Changling$^{1,2}$  Xiao Jianguo$^{3}$  Cui Jian$^{1}$  Zhang Bei$^{1}$  Li Feng$^{4}$\\
$^1$ Computer Center of Peking Univeristy\\
$^2$ School of Electronics Engineering and Computer Science of Peking University\\
$^3$ Institute of Computer Science \& Technology  of Peking University
 \\ Beijing ,100871 ,P. R. China \\
$^4$ ShenYang Academy of Governance,  ShenYang, 110000,P. R. China
\\
% For a paper whose authors are all at the same institution,
% omit the following lines up until the closing ``}''.
% Additional authors and addresses can be added with ``\and'',
% just like the second author.
%\and
%\\
%Institute of Computer Science \& Technology  of Peking University \\ Beijing ,100871 ,P. R. China
}

\maketitle

\begin{abstract}

% ÖÐÎÄÕªÒª
%·¢ÏÖÍøÂçÉÏÌṩ·þÎñµÄ½ÚµãµÄ¶ÔÍøÂç¹ÜÀíÈËÔ±ºÍ°²È«Éó¼Æ¹¤×÷¶øÑÔ¶¼·Ç³£ÖØÒª¡£
%ÌرðÊǶԴóµÄÔËÓªÉ̺ÍÔ°ÇøÍøÀ´½²£¬ÓÉÓÚÍøÂçÖÐÒª´«Ê亣Á¿µÄÊý¾Ý£¬
%ʹÓÃNetFlowÊý¾Ý½øÐзþÎñ½ÚµãµÄ·¢ÏÖÊÇ×îºÏÊʵÄÑ¡Ôñ¡£
%³ä·ÖÀûÓÃÍøÂçÁ÷ÖÐÔ̺¬µÄÖظ´ÐÅÏ¢ÄܺܺõذïÖú·¢ÏÖÍøÂçÖеķþÎñ½Úµã
%ÓÉÓÚNetFlowÁ÷µÄһЩÌØÊâÐÔ£¬µ±Ç°µÄ´ÓÁ÷Êý¾ÝÖз¢ÏÖÖظ´µÄ·½·¨¶ÔÕâ¸ö¶¼²»Êʺϡ£
%±¾ÎÄÌá³öÁËÒ»ÖÖ»ùÓÚÂÖת»úÖÆBuddy Bloom FilterÀ´½â¾öÕâ¸öÎÊÌâ
%

Knowledge about which nodes provide services % can be used for network monitor or policy checking
is of critical importance for network administrators.
Discovery of service nodes can be done by making full use of duplicate element detection in flows.
Because the amount of traffic across network is massive, especially in large ISPs or campus networks,
we propose an approximate algorithm with Round-robin Buddy Bloom Filters(RBBF)
for service detection using NetFlow data solely.
The properties and analysis of RBBF data structure are also given.
Our method has better time/space efficiency than conventional algorithm with a small false positive rate.%portion of false positive.
We also demonstrate the contributions through a prototype system by real world case studies.

{\bf Keywords:}  duplicate detection; service nodes discovery; Buddy Bloom Filter;  round-robin schema; NetFlow
\end{abstract}

%-------------------------------------------------------------------------
\section{Introduction}

Today's computer networks support many different types of services.
The knowledge about which nodes provide the services % and how nodes use the network
is of critical importance
for anyone who is responsible for managing and protecting organization's network.
%from .  % Õâ¾ä²»ÒªÁË
For example, undesired services, which  often result from compromised hosts, break up network policy
and cause vulnerability and information disclosure.
Monitoring the availability of services can help us to find network failure or hosts down events.
% ¼ÓµãIPv6µÄÐèÇó£¬ÈçÒµÎñ
Meanwhile, there is an inevitable trend that IPv6, as the preferred next generation Internet protocol,
is widely adopted and used  with rapid development of Internet and exhaustion of IPv4 addresses.
%But less devices compatible with IPv6 are deployed.
But there are not enough IPv6-compatible networking devices deployed.
Without the protection like in IPv4 environment,
IPv6 service providers have more difficulty in applying %are harder to apply
security policies and are more vulnerable to intrusions and attacks.
%But what services are provided and who are using these is still uncertain for the lack of specialized device. % ÊÇÅжÏÄÇЩ½ÚµãÔÚÌṩ·þÎñ£¬¶ø²»ÊÇÌṩʲô·þÎñ
%Is there any similar host behavior between IPv4 network and IPv6 network are still unknown. % ²»×¨ÃŻشðÕâ¸öÎÊÌâ¡£¸Ä˵·¨
%Also some intrinsic safety problems are not solved completely in IPv6 networks compared with traditional IPv4 networks. %ÓëÄÚÈݲ»·û
% IPv6¼æÈݵÄϵͳ
The knowledge of IPv6 services is still far from enough without the help of IPv6-ready devices.
Discovering the service providers and paying more attention to them is a key step to gain % better knowledge of network.  % Óô¦£¬´Ósection 4 ÌáÁ¶
awareness of network.

We call a service provider a service node(SN), which receives request messages from clients % clients ÊÇʲô
and generates response messages.
A service node is hosted on a computer, identified by its IPv4/IPv6 address and
a specific transport layer port, thus a unique end node tuple \{ip address, port number, protocol\}
is used for every service node.  % ¸ø for example , protocol: tcp/udp, °Ñp2pµÄ½éÉÜÒ²·ÅÕâ¶ù
For example, a web server www.example.com \cite{rfc2606}  provides service on "http://192.0.43.10:80" or "http://[2001:500:88:200::10]:80".
In this case the ip address is 192.0.43.10 or 2001:500:88:200::10, the port number is 80 and the protocol is TCP.
A service node provides service repeatedly for some clients or a single client several times in a fixed period.
In this paper,we don't distinguish between transient and permanent service nodes, and  focus only on TCP and UDP protocols.
Specifically, we view P2P transactions as following the client/server model,
even if the server node in this case may handle only a limited number of clients for a limited amount of time.

%Õâ¶ÎÒªÖصãÔÙ¸Ä
We use passive discovery of service nodes from NetFlow data.
%Service discovery is to reveal the service nodes from network activities.  % ÎÒÃǵķ½·¨ÊDZ»¶¯µÄ£¬Ç¿Ò»µãÓïÆø
Unlike active probing which actively scans the network, passive service discovery uses traffic information\cite{Bartlett2007}.
%That is also partial goals of network awareness\ref{network awareness}. NetFlowµÄ½éÉÜ¿ÉÉÔÉÙ
In large organizations networks such as campus networks and some ISPs, the amount of traffic across the network
%is vast beyond capturing and seeing every packets'
is too large to capture and see every packet's
payload,  which makes packet level service discovery\cite{pads} impractical.
% °ÑÖظ´µÄ·¢ÏÖÓë·þÎñ¹ØÁªÆðÀ´%
In those cases, the most widely used information source is NetFlow, which uses tuples like \{source ip address, source port number, destination ip address,
destination port, protocol\} as a key, where a flow is defined as a unidirectional sequence of packets that all share the same key.
By this way, NetFlow offers a good trade-off between the level of details provided and scalability.
A majority of networks have been equipped with devices to collect and export NetFlow,
and many tools are available to process such data \cite{nfdump,ntop,SiLK}.
NetFlow gains more and more attention from network and security administrators and researchers\cite{FloMA,So-In2009,Lakkaraju04nvisionip:netflow,Barford2002a,Moore2005a}.
\begin{comment}
Numerous attempts have been made to mine information from NetFlow data\cite{FloMA,So-In2009}.
%\ref{visualization, security, awareness} %%%% to expand the point....
%\cite{So-In2009,Taylor2009,Lakkaraju04nvisionip:netflow}.
%\ref{LEINEN, S. FloMA: Pointers and Software, NetFlow. Tech. rep.,SWITCH, 2010.}.
Most of their work focus on visualization\cite{Lakkaraju04nvisionip:netflow,Taylor2009},
security\cite{Barford2002a,Lakhina2004a,Xu2005},
and traffic classification\cite{Karagiannis2005a,Moore2005a}.
but service nodes discovery, a basic step toward network awareness, doesn't attract much attentions.
\end{comment}

The foundation of our method, passive discovery of service nodes from NetFlow data,
is detecting duplication in flows.
The basic idea comes from the observation that service nodes usually
perform similar interactions with several other hosts, usually called clients, over a certain time period.
This assumption, is confirmed by behavioral analysis of network traffic,
such as BLINC proposed by Faloutsos
et al. \cite{Karagiannis2005a}
and traffic activity graph decomposition proposed by Jin et al. \cite{Jin2009}.
\begin{comment}
%²¹³äIBMËÕÀèÊ¿Ñо¿ÔºµÄÎÄÕÂÂÛÊö
% ¼õÉÙÄÚÈÝ
Andreas etc. \cite{Kind2006}
%\ref{Relationship Discovery with NetFlow to Enable Business-Driven IT Management, IBM Aurora}
use a time-based event correlation algorithm  and data mining concept of association rules
to passive discover dependencies among different service nodes (computer systems, software components, and services),
in their NetFlow data processing system, named Aurora, of IBM Research Laboratory.
A key advantage of a discovery method based on NetFlow is that, apart from configuration,
no changes to existing network architectures are required, no credentials are needed
and privacy is not affected as packet payload information is not provided by NetFlow.
Their algorithm can extract relations between flows into dependencies.
From their visualized results,
% Õâ¾ä»°ËµµÃÓÐÎÊÌâ
one can easily find out what service nodes are connected by how many different clients almost simultaneously.
Because their goal is to discover relationships between different service nodes,
they leave out the task of determining which end node is the server.
A common practice is mapping IANA assigned transport layer port number
%\ref{http://www.iana.org/assignments/ports-numbers}
\cite{INNA}
to  service type, any side using that port number deem as a service node.
This method cannot not discriminate unregistered port number or dynamic assigned ports,
thus has a big drawback in today's network, for the wide
% usage of dynamic assigned ports
applications that dynamically assign ports like P2P.
%¼ÌÐøд£ºÖظ´Ê¹ÓõĶ˿ÚÐÅÏ¢ÊÇ·¢ÏÖ·þÎñ¶ËµÄÖØÒªÒÀ¾Ý¡£
%The repeat appearance of server is a key feature of service node.
\end{comment}
A service-based traffic classification using that idea is proposed by Baldi et al.\cite{Baldi2009},
but their service discovery module, which they call it service identification,  is a
payload-based implementation.
% µÚÒ»¾ä»°£¬µã³öÎÊÌâ
%A passive and accurate detection of network services working on network flows would be trivial
%with bidirectional flows where request flows initiated by clients and reply flows initiated by servers can be easily identified.
A work that is very close to ours was done by Berthier et al.\cite{Berthier2010}%\ref{nfsight},
using NetFlow for passive service detection.
\begin{comment}
They notice service node use non-volatile service tuple \{ip address, port number, protocol\} to provide service.
each server typically has two or more clients that use the service.
Furthermore, even if only one real user accesses the service (e.g., identified by the
IP address of the user¡¯s machine), the communication will probably require multiple connections and the
client side of the access often uses different client port numbers.
thus, different flows relate to a specified server contains the same SN tuples. %has duplicate SN tuple.
%˵ËûÃǵķ½·¨£¬ÔٸĸÄ
\end{comment}
They detect duplicate service node tuples by directly tracking all tuples in flows, i.e., by maintaining a counter for each tuple.
They use a set of heuristics rules to carry out server identification, such flow timing, port number defined by
IANA\cite{INNA}, etc. These heuristics rules are later combined by using naive Bayesian method.
Their method is somehow complicated and has relatively low efficiency. % Ç°ÃæÔÙ˵˵¸´ÔÓ
%naive Bayesian also make it not so easy  to extend. %% ˵·¨¸ÄίÍñһЩ£¬ ÁíÍâÒ×À©Õ¹ÐÔÒ²ÊÇÎÊÌâµÄ˵·¨£¿

%A naive approach is to directly track all the flows, i.e., by maintaining a counter for each flow.
%Flow and it's conjugate form are deem as having the same key combinations.
%Whenever a counter is larger than one, the flow it belongs to deem as a duplicate element.

%¹ØÓÚbloom filterµÄÄÚÈÝÔÙ¶à˵µã
%Surely the intuitionistic way to detect duplication of a flow stream is counting every elements.
%%Though several ways can be derived for detect duplicate SN tuples in the flows,
%%counting the number of distinct tuples is of least efficiency. % ¸Ä¸ö˵·¨£¿

To effectively detect duplication for streaming data,% of network traffic flow, a typical massive data stream,
several approximate algorithms are proposed.  Metwally  et al. \cite{Metwally2005} %\ref{Duplicate Detection in Click Streams,2005}
propose to use \emph{counting Bloom filters} when detecting click fraud in
pay-per-click advertisement systems.
A counting Bloom Filter supports both inserting and deleting elements to sets.
If a fresh element is inserted, the corresponding counters will be increased;
conversely, the corresponding counters will be decreased if a stale element is deleted.
However, this method is effective only when the stale element to be removed is known,
which is impossible when there is insufficient memory to keep the exact element sequence.
Deng and Rafiei  \cite{Deng2006} %\ref{Approximately Detecting Duplicates for Streaming Data  Using Stable Bloom Filters,2006}
introduce  \emph{stable Bloom Filters}  to randomly discard stale elements in landmark windows.
The stable Bloom Filter sets cells corresponding to input data to the maximum value
and decreases the values of randomly selected cells whenever data arrives. However, Stable Bloom Filters
have false positive errors as well as false negative errors.
In addition, the parameter configuration in Stable Bloom Filters is complex and
its solution is not always optimal.
Wang et al. \cite{Wang2010} %\ref{approximate detection of duplicates for data streams}
use \emph{decaying bloom filter} to detect duplicates for data streams in a sliding window.
It sets cells corresponding to input data to the maximum value similar to \emph{stable Bloom filters},
but decreases all the non-zero counters by 1 before inserting a new element.
The problem  with their approach is that the time complexity of inserting an element is
much higher than others and is unacceptable for many real-time applications.
Lee, et al. \cite{Lee2011} propose \emph{Time Interval Bloom Filter} to eliminate duplicate
RFID data in a small time interval. Their method is applicable only for time sensitive data.

Although those approximate algorithms based on variations of Bloom Filters  have better performance in the sense
of space/time trade-off, they cannot be directly applied to our  application because of the following problems:
%Our goal of discovery service nodes is to select the duplicate elements in flows but not to eliminate them.
%Besides, there are serval types of duplication in network traffic flows which makes the methods listed above not applicable directly.

%Wei and Jiang \cite{Wei2011} %\ref{Detecting Duplicates over Sliding Windows with RAM-Efficient Detached Counting Bloom Filter Arrays }
%proposes a \emph{Detached Counting Bloom filter Array} to flexibly and efficiently detect duplicates over sliding windows.
%But , a typical massive data stream, has its unique feature, which makes those methods listed above not applicable directly.
%Õâ¶ÎÎÄ×ÖÖØд£¡ ÎÒÃÇÒª½â¾öµÄÎÊÌâÊÇ£º

%Although Bloom filter has better performance in the sense
%of space/time trade-off, it cannot be directly applied to our
%application because of the following problems:

%½²Á÷Á¿ÖÐʶ±ðÖظ´ÓëÆÕͨµÄÖظ´Ê¶±ð²»Í¬µÄµØ·½ºÍÄѵã
Firstly, the elements of interest are different from duplicate detection algorithms listed above.
Our goal of discovering service nodes  in network traffic flows is to select and remember duplicate elements,
%but not to eliminate them.
%With network traffic flows, we are concerned with the elements
%that have duplications
but traditional duplication detect algorithms prefer to eliminate them.
%distinct elements.
%»¹ÓÐÒ»¸öʱ¼äÏȺó²»ÄÜ×÷ΪÇø·ÖË­ÊÇ·þÎñ¶ËµÄÒÀ¾Ý

Secondly, a network session is a valid communication between one client-end node and one server-end node.
A real world network session lasts from milliseconds to several hours.
By contrast NetFlow collection process saves data to files at a certain interval, which is usually 5 minutes.
Thus, NetFlow may break up a logical flow into multiple flows.
This type of repetition is an artificial duplication and  should be eliminated.

Thirdly, a valid network transaction is always bidirectional,
%thus bidirectional flows are always be reported.
thus flows reported should always be bidirectional.
Flows consists of two end node tuples,
one is source side and the other is destination side. When bidirectional flows are captured from network,
the two flows with matching end node tuples, source and destination exchanged, should be taken as 'duplicate'.
Because obviously we cannot tell which side of a flow
%, source or destination,
is the repeatedly used service node in advance,
both source part and destination part must be checked.
If we check all flows, surely the same \{ip address, port number\} pairs will be seen at least twice in flows of both direction.

%Thirdly, a repeatedly emergence of end node tuples is the characteristic of server node,
%the number of times it repeat is not so important.

%
%¹±Ï×, ÖØÒª£¡£¡£¡
Thus the task of accurately detecting servers based solely on NetFlow flows is challenging.
We propose a new data structure, \emph{Buddy Bloom Filters} with round-robin schema
and a two stage algorithm to solve the problems listed above.
%\subsection{contribution}
The contributions of this paper include:  %merge
\begin{itemize}

\item The structure and properties of \emph{Buddy Bloom Filter(BBF)} are given.
To mitigate boundary effect of a single BBF, multiple BBFs are used in round-robin schema
over jumping windows. The feature that bitwise OR operation can be used over multiple BBFs
is further exploited to speed up the query on history windows.
%A summary BBF over history windows ʹÓûã×ÜBBF

%Propose a new data structure of Bloom Filter variations, \emph{Buddy Bloom Filters}, to effectively
%select and save duplicate elements, and mitigate boundary effect by multiple time windows and a round-robin schema of BBFs, acronym as \emph{RBBF}.

\item Analysis of the data structure is given, including false positive rate, parameters selection,
computational complexity, etc.
%features and properties of \emph{RBBF} are given,  and the false positive rate,
%parameters selection, and computation time complexity of \emph{RBBF}  are analyzed.
A two stage algorithm based on \emph{Round-robin Buddy Bloom Filter(RBBF)} is proposed,
which is specifically used for detecting service nodes form NetFlow data.
Besides, the storage schema of detecting results is proposed,
which can be extended to support rapid query of existence operation.
%that is used by network operators regularly. %Õâ¾ä»°ÐèÒª¸Äд

\item The performance of proposed algorithm in comparison with hash table is analyzed.
A prototype system,
which is compatible with both IPv4 and IPv6,
using the proposed data structure and algorithm is introduced.
Some real world use cases of the system are discussed.

%ʶ±ðÂÊ£¬ground truth

\end{itemize}

%½á¹¹
%±¾ÎĵÄ×éÖ¯·½Ê½ÈçÏ£º
%ÏȽéÉÜBloomfilterµÄ¹¤×÷·½Ê½£¬Ìá³ö»ùÓÚʱ¼ä´°¿ÚµÄÖظ´ÐÅÏ¢·¢ÏÖ·½·¨¡£
%½Ó׎éÉܶ¯Ì¬Öظ´ÐÅÏ¢µÄ·¢ÏÖËã·¨£¬µÚËIJ¿·Ö¶Ô¸Ã·½·¨µÄÓÐЧÐÔºÍ׼ȷÐÔ½øÐзÖÎö£¬
%È»ºó½éÉÜÔ­ÐÍϵͳµÄ¹¦ÄܺÍÊý¾Ý·ÖÎö¡£×îºó½áÂÛ·ÖÎö¡£

The rest of this paper is organized as follows. Section 2 reviews some background knowledge of bloom filter
and proposes a novel data structure \emph{Buddy Bloom Filter} along with the round-robin schema.
The properties of \emph{RBBF} and analysis of \emph{RBBF} are also included in this section.
In Section 3, we present the service nodes detection algorithm based on \emph{RBBF}.
%ÐÔÄܺÍ׼ȷÐÔ·ÖÎö
We report the results of our experiments and discuss a number of use cases in Section 4,
followed by some concluding remarks in Section 5.

\section{the RBBF Data Structure}

%%%Õâ¶ÎÎÄ×ÖÐèÒª¾«¼ò
The classical Bloom Filter was proposed by Burton Bloom in 1970 \cite{Bloom1970}% \ref{bloomfilter}
and used for detecting approximate membership of elements.
A Bloom Filter uses k hash function $h_1, h_2 ,\dots,h_k$ to hash each element of a set $F$ into a bit array of the  size $m$,
where $F$ comes from a universe $U$.
All bits in Bloom Filter are initially set to 0.
For each element $f \in F$, the insertion process marks some bits in Bloom Filter to 1,
which positions are indexed by  $h_1(f),h_2(f),\dots, h_k(f)$.

Classical Bloom Filter allows membership queries. For example, given a newly arrived element $f \in F$,
whether $f$ is a duplicate in $F$ can be determined by the bits at the positions
$h_1(f), h_2(f), \dots, h_k(f)$.
If any of these bits is zero, we know $f$ is a distinct fresh element.
Otherwise it is regarded as a duplicate with a certain probability of error.
This method has a small probability of producing a false-positive error,
i.e., a distinct element wrongly reported as duplicate.
If the hash functions are perfectly random, the probability of a false positive (false positive rate).
$Pr(FP) = (1-P_0)^k = (1-(1-\frac{1}{m})^{kn})^k  \approx (1-e^{-\frac{kn}{m}})^k$,
where $P_0$ is the probability that a specific bit is still 0 after inserting n distinct elements.
%The expression on the right s minimized when $k = m \frac{ln2}{n}$,
%in which case the error rate is $Pr(FP)_{min} = (1 - \frac{1}{2})^k = (0.6185)^{\frac{m}{n}}$, where
%$m$ is the number of bits in Bloom Filter. \ref{A. Z. Broder, Network Applications of Bloom
%Filters: A Survey}.

Although Bloom Filter is simple and space efficient, it does not allow deletions.
Deleting elements from a Bloom Filter cannot be done simply by changing the corresponding bits back to zeros,
as a single bit may correspond to multiple elements.
Therefore, in the continually incoming NetFlow flows, which is a typical data stream environment,
if we apply a single Bloom Filter in detecting duplication, when more and more new flows arrive,
the portion of zeros in Bloom Filter decreases continuously, the false positive rate increases accordingly,
and finally reach the limit one. At this time every distinct element is reported as duplicate,
indicating that Bloom Filter fails completely, a state of Bloom Filter called "full".

\subsection{Buddy Bloom Filter}

We define a pair of Bloom Filters that share the same hash functions a Buddy Bloom Filter(BBF).
Each BBF is composed of two bit arrays B1 and B2, of the same size $m$.
%On service node tuples duplication detection process,
A newly arrived element $f_i$ is query-then-set on bit array B1 first,
If it is the first time we see this element, only the bits positioned by hash functions will be set in B1.
If it is a repeatedly occurring element, an extra following step of setting on bit array B2 is performed.
The state change of BBF is given in Figure \ref{fig:insert}.
A select-then-remember schema of duplicate elements is  actually carried out.
The bit array B2 of BBF represents those duplicate elements.
We call B1 the selecting part and B2 the remembering part of a single BBF.
In practice, for efficiency consideration,
when an element comes, the first step is to query B2. If existence estimation is true,
the element is skipped and no more action is taken. Otherwise query-then-set on B1 is followed ,
then set on B2 conditionally.

% µ¥¸öBBFµÄʹÓÃ

\begin{figure}[htbp]
\begin{center}
  % Requires \usepackage{graphicx}
  \includegraphics[width=6.0in]{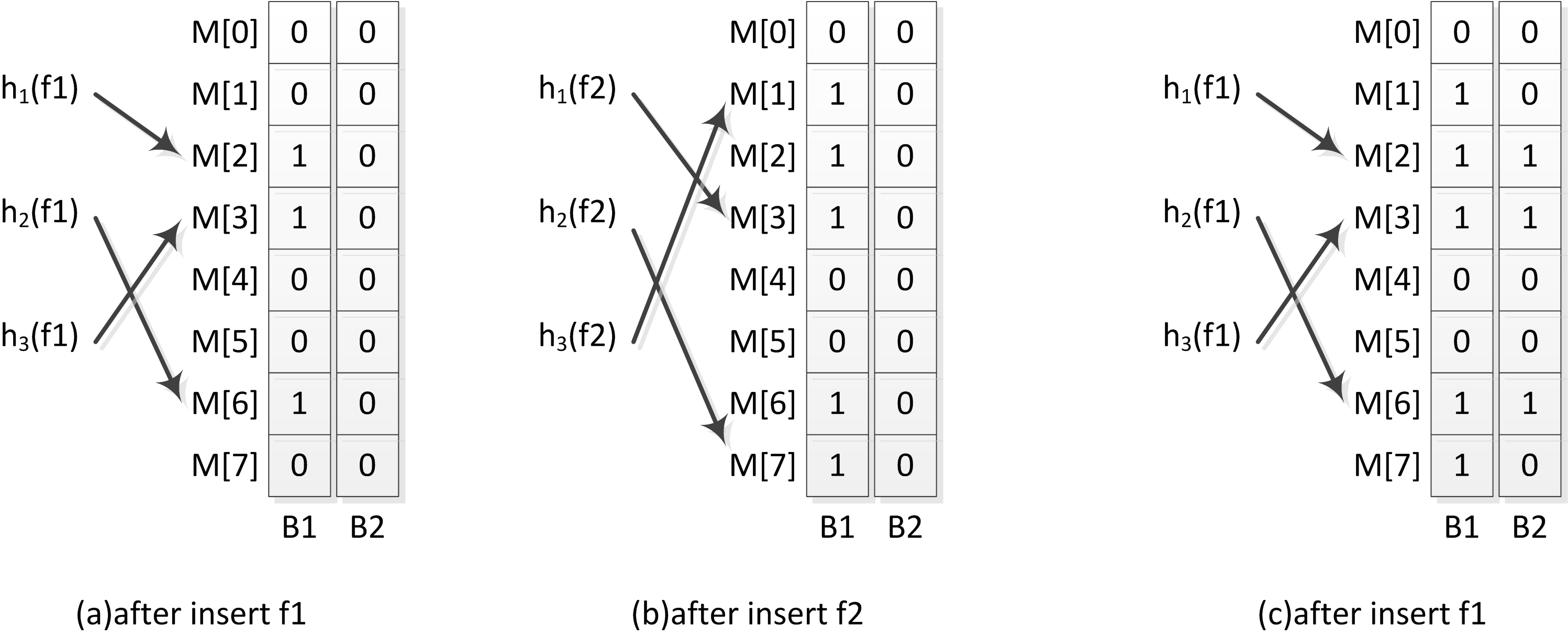}
  \caption{State of Buddy Bloom Filter}\label{fig:insert}
\end{center}
\end{figure}

A single Buddy Bloom Filter is an extension of classical Bloom Filter.
It doubles memory requirement like counting Bloom Filter\cite{Broder2004} with two-bit counter,
but they are different in structure and have different properties.
A counting Bloom filter uses counters instead of single bits as representing cells, and
each counter records the number of elements that are
associated with the corresponding cell. If a fresh element is
inserted, the corresponding counters are increased;
conversely, the corresponding counters are decreased if
a stale element is deleted.
A Buddy Bloom Filter does not support deletion of elements.
The two-bit counter of counting Bloom Filter is an indivisible cell, which can count number $\{0 \dots 3\}$.
On the contrary, the arrays of Buddy Bloom Filter can be regarded as two side-by-side buddies,
the selecting part B1 and remembering part B2 are actually two Bloom Filters with the same group of hash functions.
Each index position $i$ of Buddy Bloom Filter $B1[i]$ and $B2[i]$ can be set individually in theory.

\subsection{Round-robin Buddy Bloom Filter}

To allow stale elements to be removed from sets, we use time windows to separate elements.
A naive approach is that every window correspond to a Buddy Bloom Filter.
However, this approach introduces boundary effect, as shown in Figure\ref{fig:boundary effect}.
In NetFlow data processing, a flow $f_i$ is unidirectional. Therefore valid communications are always bidirectional,
then the conjugate of original flow $\bar{f_i}$, with source and destination swapped, is expected to appear in the set.
If a flow $f_i$ arrives at time $t_1$ and its matching flow $\bar{f_i}$ arrives at time $t_2$,
since $f_i$ and $\bar{f_i}$ is recorded in different time windows, they are deemed as unmatched unidirectional flow.
Unprocessed boundary effect causes false negative report.

% ´°¿Ú±ßԵЧӦ
\begin{figure}[htbp]
\begin{center}
  % Requires \usepackage{graphicx}
  \includegraphics[width=3.0in]{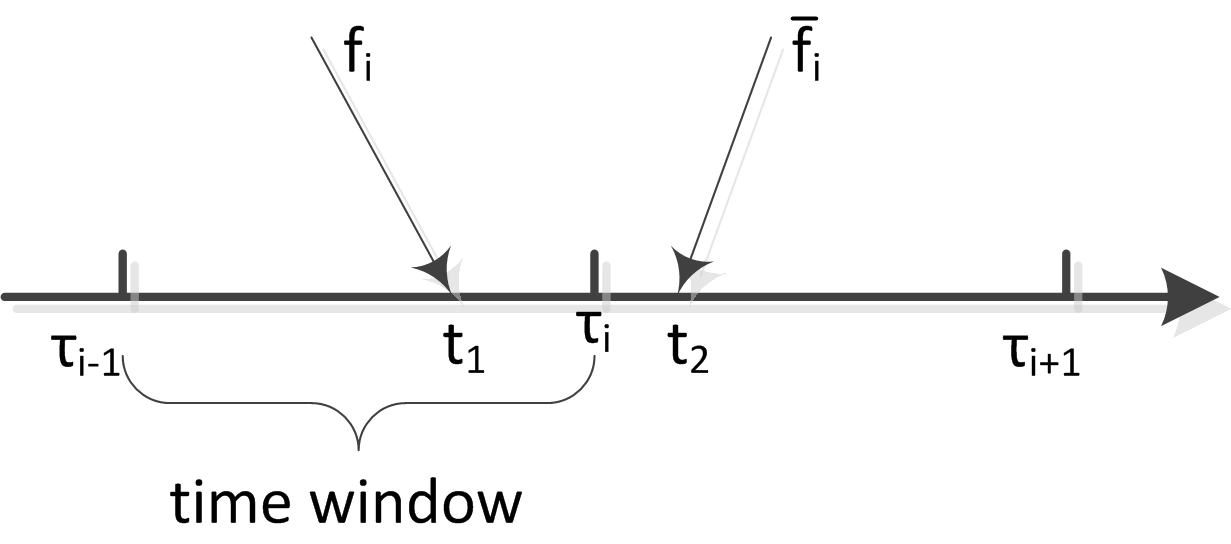}
  \caption{Boundary Effect of Single Buddy Bloom Filter}\label{fig:boundary effect}
\end{center}
\end{figure}

To reduce the boundary effect, we partition a discrete-time interval $\Gamma$ of into $M$ successive time windows,
where $M$ is the number of windows big enough to mitigate the boundary effect.
Assume the length of a window is $\gamma$. Then, we have $\Gamma = M \times \gamma $.
Each time window uses a Buddy Bloom Filter. Among these BBFs, $M-1$ BBFs are in the historical stage of
monitoring interval, one BBF is currently coping with incoming elements.
The query of existence is operated in the entire time interval $\Gamma$.
A round-robin strategy is deployed with these BBFs.
Before jumping to a new time window, the oldest BBF is deemed as stale and will be cleared and
reused in the new time window. The RBBF schema is shown in Figure \ref{fig:RBBF}.

% »­¸ö CBBFµÄͼƬ
\begin{figure}[htbp]
\begin{center}
  \includegraphics[width=4.0in]{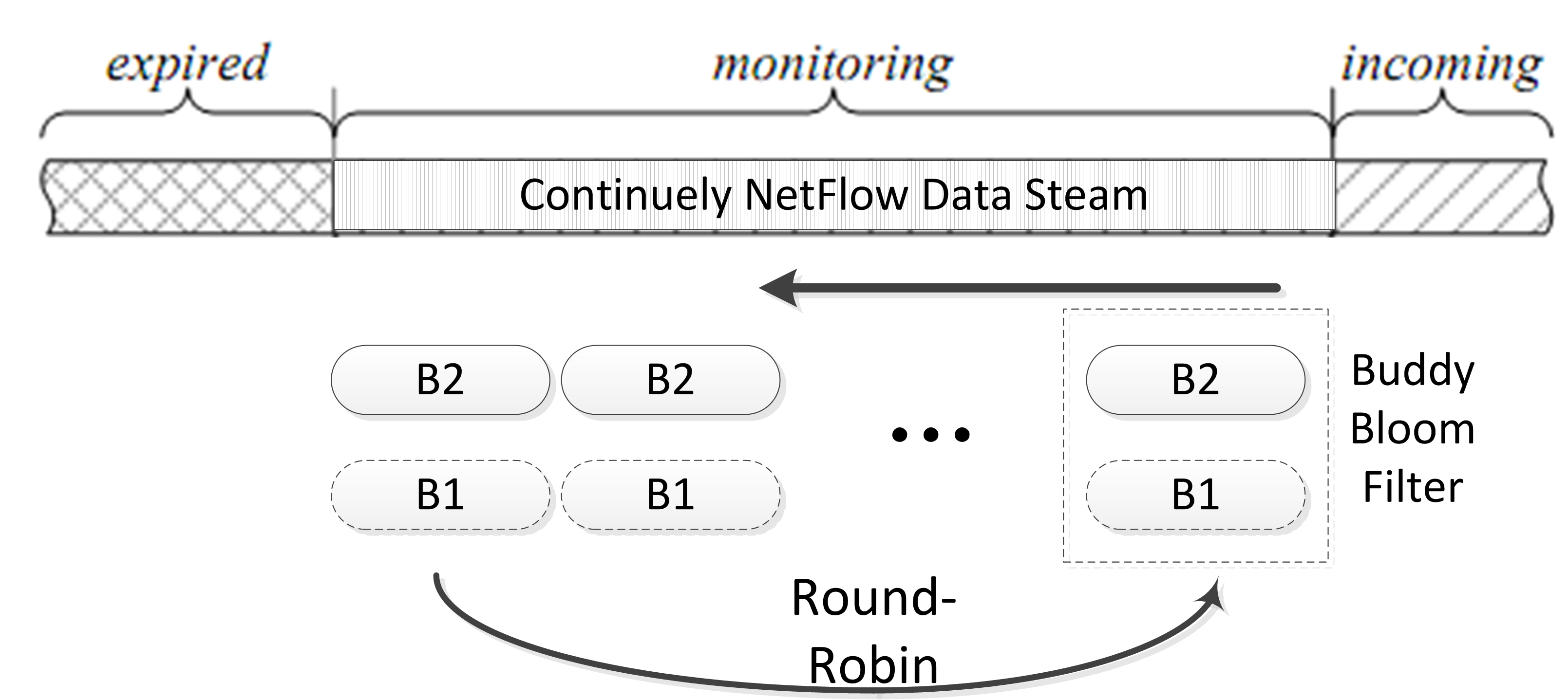}
  \caption{Round-robin of Buddy Bloom Filters}\label{fig:RBBF}
\end{center}
\end{figure}

The basic idea behind round-robin windows usage of Buddy Bloom Filter is jumping window model\cite{Zhu2002},
%\ref{Y. Zhu and D. Shasha, StatStream: Statistical Monitoring of Thousands of Data Streams in Real Time,2002}.
The jumping window model divides a data stream into smaller disjoint sub-windows,
where each sub-window  corresponds to a fixed time interval or contains a fixed number of elements.
A jumping window covers a certain number of sub-windows and slides in jumps as the data flows.
Once all of the sub-windows are filled, the oldest sub-window is removed,
and a  new sub-window is started to accommodate fresh elements.
The data structure proposed here uses a single Buddy Bloom Filter to associate with each sub-windows.
Specially for NetFlow data processing, sub-window width can be chosen as the same time interval as
the period of collecting and saving data files, which is usually 5 minutes.

% RBBF

%A single BBF can reach $3$ states, $\{B1[i]=0,B2[i]=0\}, \{B1[i]=1,B2[i]=0\}, \{B1[i]=1,B2[i]=1\}$.
%The state $\{ B1[i]=0,B2[i]=1 \}$ is achievable when historical stage of RBBF is used.
%Check Section 3 Algorithm \ref{alg:DetectDupNode} to find how the state be reached.

One great feature of Buddy Bloom Filter is that BBF support bitwise operation on its cell, which is shown in Figure \ref{fig:bitwise summary}.
When multiple Buddy Bloom Filters are used,
the query on history windows can be done
%This feature allows the queries on multiple BBFs to be done
on summary BBF by only one comparison.

\begin{figure}[htbp]
\begin{center}
  \includegraphics[width=3.0in]{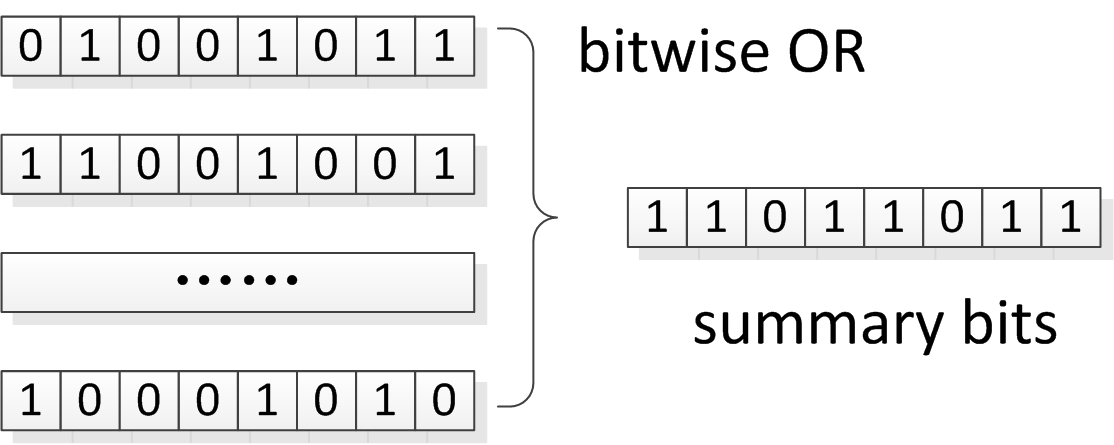}
  \caption{Bitwise Summary of History Windows}\label{fig:bitwise summary}
\end{center}
\end{figure}

%The first two bit arrays is the currently using Buddy Bloom Filter.
%Among these, the $j$ ones are used for storing each time period's Buddy Bloom Filter.

%in the cycle. Rather than query each Bloom Filter in circle, we use the oldest stale cell of the array as
%a summary bit array, which is logic OR of all bit array in the circle.
%That's the additional plus one bit array. Totally, including the BBF, the space usage is $M+3$ bit arrays of size $m$.

\subsection{Analysis of Round-robin Buddy Bloom Filters}

%ÌÖÂÛ¿Õ¼äÕ¼ÓÃÇé¿öÓëÔªËظöÊýµÄ¹Øϵ£¬À©Õ¹£¬´íÎóÂÊ

Firstly let's consider using a single Buddy Bloom filter (BBF) with two arrays that each has $m$ bits
and a group of $k$ hash functions to detect duplicate items
in a static set of $N$ elements that has a ratio of $r_d$ distinct elements.

% ·ÖÎöµ¥¸öµÄBBF
\begin{lem}\label{BBF false positive rate}
There could be only false positives in single BBF, and the probability of false positive (false positive
rate ($FP_{BBF}$)) is

\begin{equation}\label{eq:bbf-fp}
    %(1 - e^{-kN(1-r_d)/m})^k \le
    PR(FP_{BBF}) \le (1 - e^{ -{kN*(1-r_d-(2r_d-1)*(1-e^{-{kNr_d}/{m}})^k )}/{m}})^k
\end{equation}

\end{lem}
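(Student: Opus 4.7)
The plan is to reduce the BBF's false-positive analysis to a standard Bloom-filter calculation applied to the remembering part B2, with the key step being a careful count of how many insertions effectively populate B2.

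I would first dispose of the ``no false negatives'' half of the claim. If an element $f$ truly appears more than once, its first occurrence sets the bits $h_1(f),\dots,h_k(f)$ in B1. Every subsequent occurrence of $f$ therefore receives a positive answer from the B1 query and triggers the set step on B2. Hence those same $k$ positions of B2 are marked, and any later duplicate-detection query on $f$ (which reads from B2) returns positive. The only remaining source of error is a query on some non-duplicate $g$ for which the bits $h_i(g)$ happen to be set in B2 by other elements, which is a genuine false positive.

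Next I would invoke the classical Bloom-filter identity: if B2 receives $n_{B2}$ effective insertions, then $\Pr(\mathrm{FP}) \approx (1 - e^{-k n_{B2}/m})^k$. The task reduces to bounding $n_{B2}$ from above. Two mechanisms cause B2 to be set: (i) a genuine repeated occurrence, of which there are $(1-r_d)N$, and (ii) a first occurrence of a distinct value whose B1 query spuriously returns positive. For (ii), I would apply the classical Bloom-filter FP analysis to B1 itself, which after its $r_d N$ distinct insertions has rate $p_1 \approx (1 - e^{-k r_d N/m})^k$. Aggregating the two contributions and then correcting for their overlap --- a value whose first occurrence generates a B1 false positive may later also arrive as a genuine repeat and would otherwise be counted twice in B2 --- yields the coefficient $1 - r_d - (2r_d - 1)p_1$ in the exponent, and substituting this bound into the classical formula gives the stated inequality.

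The main obstacle is the overlap bookkeeping in (ii). A crude union bound $n_{B2} \le (1-r_d)N + r_d N\, p_1$ would double-count insertions whose underlying key is both spuriously passed by B1 on its first occurrence and also genuinely repeated later, giving a looser bound. To obtain the stated coefficient $(2r_d - 1)$ on $p_1$, I would partition the $r_d N$ distinct values into those that reappear in the stream and those that do not, compute the expected number of first occurrences in each category whose B1 answer is a false positive, and tally only the net new B2 insertions. Once $n_{B2}$ is bounded, substituting into $(1 - e^{-k n_{B2}/m})^k$ and simplifying is routine.
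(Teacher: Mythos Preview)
Your proposal is correct and follows essentially the same route as the paper: view B2 as an ordinary Bloom filter, bound the number of distinct elements it receives by the $N(1-r_d)$ genuine repeaters plus the $N(2r_d-1)$ once-only values that may be spuriously promoted via a B1 false positive (at rate $p_1=(1-e^{-kNr_d/m})^k$), then substitute into the classical formula. The paper simply states the count $N(2r_d-1)$ of once-only elements directly rather than reaching it through your ``crude union bound then remove overlap'' detour, but the arithmetic and the final bound are identical.
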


\begin{proof}
A Buddy Bloom Filter can be viewed as two separated Bloom Filters.
B1 is used for elements selection, and B2 is used to remember duplicate elements.
Thus there can be only false positives in a single BBF.

As stated in\cite{Broder2004}, the probability of a false positive of a single Bloom Filter ($FP_{BF}$)  is
$$PR(FP_{BF}) = (1-P_0)^k = (1-(1-\frac{1}{m})^{kn})^k  \approx (1-e^{-\frac{kn}{m}})^k$$,
where $P_0$ is the probability that a specific bit is still 0 after inserting n distinct elements.

Suppose the duplicate elements are evenly distributed. There are $N(1-r_d)$ elements that appear more than one time.
These duplicate elements are sent to B2 and remembered for the first time.
There are no more than $N(1-r_d)$ actually duplicate elements in B2, since some elements can appear more than twice.
Also some elements can be sent to B2 wrongly, according to B1's false positive rate.
The number of elements appear only once in B1 is $N(2r_d-1)$, each has a false positive rate no larger than $(1-e^{-{kNr_d}/{m}})^k$.
Thus, totally there will be less than $N*(1-r_d) + N*(2r_d-1)*(1-e^{-{kNr_d}/{m}})^k$ distinct elements in B2.
%$N*((1-e^{kN(1-r_d)/m})^k- (1-e^{kN(2r_d-1)/m})^k$

%and large than $N*(1-r_d)$ elements into B2 bit-array.
Substitute the parameter of original Bloom Filter false positive rate, and then we get the answer.
\end{proof}

Thus Buddy Bloom Filter has no false negatives similar to original Bloom Filter.
With regards to network traffic processing, especially for the algorithm proposed in section 3,
this means no duplicate elements of interest are ignored,
with a chance of bringing some not really duplicate elements into the set, increasing the false positive rate.

% ½üËƽá¹û
\begin{comment}
\begin{cor}\label{reality_fp_BBF}
real network traffic has a distinct elements ratio $ 0.5 \le r_d \le 1 $
\end{cor}

\begin{proof}
When dealing with real network traffic,  in most situation a large proportion of flows is valid bidirectional communication.
So $r_d \ge 0.5$. Even with some unidirectional traffic, such as failure connections, scanning probes, virus spanning, and etc.
The portion of "repeatedly seen" flows is still the main component of traffic.
Also, It's clear  the ratio $r_d$ should be smaller than 1, $r_d \le 1$ .
\end{proof}
\end{comment}

One measures of determining how efficient Bloom Filters are is to consider how
many bits $k$ are necessary to represent all sets of n elements from a universe in
a manner that allows false positives for at most a fraction $\varepsilon$ of the universe but
allows no false negatives.

\begin{cor}\label{simple_fp_b1_b2}
To construct a Buddy Bloom Filter for a static set with known cardinality $n$,
if $FP_{BBF}$ must  be restricted to a threshold  $\varepsilon$ with minimal space overhead,
the optimal parameters are

\begin{equation}\label{eqn:k m opt}
\begin{split}
 k_{opt} &= \lceil log_2(1/\varepsilon) \rceil \\
 m_{min} &=  \lceil n r_d \cdot  log_2e  \cdot log_2{(1/\varepsilon)}  \rceil
 \end{split}
\end{equation}

%$$k_{opt} = ln2  \cdot (m/(nr_d))$$
%$$k_{opt} = \lceil log_2(1/\varepsilon) \rceil \\
%m_{min} =  \lceil n r_d \cdot  log_2e  \cdot log_2{(1/\varepsilon)}  \rceil  $$
\end{cor}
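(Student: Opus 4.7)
The plan is to reduce the corollary to the classical Bloom-filter sizing argument by replacing the tight-but-coupled estimate of Lemma~\ref{BBF false positive rate} with a cleaner worst-case upper bound on the load of $B2$. Since $B2$ can only ever be populated by elements that actually appear in the input, its number of distinct entries is bounded by the total distinct count $nr_d$. Treating $B2$ as a standard Bloom filter of size $m$ holding at most $nr_d$ distinct items gives the decoupled upper bound
\begin{equation*}
    PR(FP_{BBF})\le\bigl(1-e^{-knr_d/m}\bigr)^k,
\end{equation*}
which has the same shape as the classical expression recalled at the start of Section~2, only with $n$ replaced by $nr_d$.

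I would then carry out the standard Bloom-filter calculus on this bound. Setting $g(k)=(1-e^{-knr_d/m})^k$ and differentiating $\ln g(k)=k\ln(1-e^{-knr_d/m})$ with respect to $k$ yields the first-order condition $e^{-knr_d/m}=\tfrac12$, hence the familiar optimum $k^{\ast}=(m/nr_d)\ln 2$ with value $g(k^{\ast})=2^{-k^{\ast}}$. Demanding $2^{-k^{\ast}}\le\varepsilon$ gives $k^{\ast}\ge\log_2(1/\varepsilon)$, and rounding up produces $k_{opt}=\lceil\log_2(1/\varepsilon)\rceil$. Substituting back into $k^{\ast}=(m/nr_d)\ln 2$ and solving for $m$ yields $m_{min}=\lceil nr_d\cdot\log_2(1/\varepsilon)/\ln 2\rceil=\lceil nr_d\cdot\log_2 e\cdot\log_2(1/\varepsilon)\rceil$, which is exactly the stated formula.

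The main obstacle is conceptual rather than computational: optimizing the tight bound of Lemma~\ref{BBF false positive rate} directly leads to an implicit equation in $m$ and $k$, because the nested factor $(1-e^{-knr_d/m})^k$ itself depends on the parameters one is trying to choose. The key move is to recognize that, for the purpose of sizing, it suffices to charge $B2$ with its worst-possible population $nr_d$; this both decouples $m$ from $k$ and matches the classical analysis verbatim. The resulting formulas are therefore a safe (slightly conservative) choice — plugging them back into Lemma~\ref{BBF false positive rate} would give a second-pass refinement, but the loose bound already delivers the closed form stated in the corollary and inherits its minimality directly from the classical optimality of $k^{\ast}=(m/n')\ln 2$.
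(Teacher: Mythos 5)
Your proposal is correct and matches the paper's own argument in essence: the paper likewise reduces the problem to classical Bloom-filter sizing with cardinality $nr_d$, justifying it by noting $B2 \subset B1$ so $FP_{B2} < FP_{B1}$ and sizing both arrays for B1's load, which is exactly your ``charge $B2$ with its worst-case population $nr_d$'' step. The only cosmetic difference is that you re-derive the classical optimum $k^{\ast}=(m/nr_d)\ln 2$ by calculus while the paper cites it from Broder and Mitzenmacher and additionally remarks that the total BBF space is $2m_{min}$.
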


\begin{proof}
A false positive occurs if all the  $k$  bits
corresponding to the item $x$ being queried are occasionally set
to $1$ by some of the $n$ elements when $x$ is not in the set,
It has been proven\cite{Broder2004}  that false positive probability of a Bloom Filter
$PR(FP_{BF})$ can be minimized to
$(1/2)^{ ln2 \cdot (m/n)}$
when $k =ln2  \cdot (m/n)$.

To construct a single  Bloom Filter of the given conditions,  %for cardinality $n$, given the threshold of false positive rate  $\varepsilon$ with minimal space overhead,
the optimal number of hash functions is $k_{opt} = \lceil log_2(1/\varepsilon) \rceil$
and the required minimal space (in bits) is $m_{min} = \lceil log_2e \cdot k_{opt} \cdot n \rceil $.
Clearly,  $FP_{BF}$  does not reach $\varepsilon$  until all the  $n$  elements are
inserted, thus  $n$  is also  called the designed capacity of a BF and $\varepsilon$ is
also referred to as the target error rate, denoted as $F$ .

With Buddy Bloom Filter, there are two BFs with different cardinal.
The simple rule of thumb of relationship is that, each element in B2 can be found in B1, the cardinal of B2 is smaller than B1.
And B1 and B2 have also the same bit array length and same hash function groups. Thus, $B2 \subset B1$.
So that $ FP_{B1} > FP_{B2} $.
The optimized parameter of k is no larger than the $k_{opt}$ of B1.
For the same reason, we can induce the minimum length of bit array of each BF in BBF.
Total number of bits needed for a BBF is $2*m_{min}$.

\end{proof}

% Join historical bloom filter error rate \ref{Network Applications of Bloom Filters: A Survey, A Broder,2005}

False positives can also occur when we do a union of historical sub-window duplication sets by
taking the OR of the bit vectors of the original Bloom Filters to generate a
summary representation.
% ÖØдÕâ¶Î£º
Broder etc. \cite{Broder2004} %\ref{Network Applications of Bloom Filters: A Survey, A Broder,2005}
give an estimated error rate of two sets union operation.
Suppose that one has two Bloom Filters representing sets S1 and S2
with the same number of bits and using the same hash functions. Intuitively,
the inner product of the two bit vectors is a measure of their similarity. More
formally, the $jth$ bit will be set in both Filters if it is set by some element in
$S1 \bigcap S2$, or if it is set simultaneously by some element in $S1 - (S1 \bigcap S2)$
and by another element in $S2 - (S1 \bigcap S2)$.
The probability that the jth bit is set in both Filters is therefore
$ m(1-(1-1/m)^{k|S_1|} - (1-1/m)^{k|S_2| + (1-1/m)^{k(|S_1|+|S_2|-|S_1 \cap S_2|)}}$
given $|S_1|, |S_2|, k, m$, and the magnitude of the inner product, one can
calculate an estimation of the intersection $|S1 \cap S2|$ using the equation above.
Similar operation can be applied to more than two sets union together.
Luckily for network traffic flows processing, especially for the algorithm stated in Section 3,
the final result is not different from considering the multiple sub-windows as a whole one.
and false positive is the same as a single Buddy Bloom Filter that spans N sub-windows.

%½ÓÏÂÀ´·ÖÎö¶à¸ö×éºÏµÄ½á¹û

\begin{cor}\label{RBBF false positive rate}
There could be only false positives in RBBF, and the probability of false positive (false positive
rate (FP)) is
\begin{equation}\label{eq:rbbf fpr}
 Pr\{FP_{RBBF}\}  \approx (1 - e^{ -{kMN*(1-r_d-(2r_d-1)*(1-e^{-{kMNr_d}/{m}})^k )}/{m}})^k \\
\le (1 -  e^{-kMNr_d/m})^k
\end{equation}
\end{cor}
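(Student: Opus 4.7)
The plan is to reduce the RBBF analysis to the single-BBF analysis already established in Lemma \ref{BBF false positive rate}, exploiting the bitwise-OR summary property discussed just above the statement. First I would argue the absence of false negatives: in each sub-window, the underlying BBF has no false negatives by Lemma \ref{BBF false positive rate}, and the round-robin query takes the OR of the remembering arrays $B2$ across the $M$ active sub-windows. Since OR only turns more bits on, it can only introduce false positives, never erase a genuine hit. Any duplicate that was remembered in some active window is therefore still reported.

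Next I would tackle the false positive rate. The key observation is that taking the bitwise OR of the $M$ remembering arrays $B2^{(1)},\dots,B2^{(M)}$ yields exactly the remembering array of a hypothetical single BBF that has processed the concatenation of the contents of all $M$ sub-windows, i.e., $MN$ elements with the same distinct ratio $r_d$. This follows because the hash functions are shared across all BBFs, so setting bit $j$ in the OR is equivalent to setting bit $j$ by at least one element somewhere in the combined stream. Once this equivalence is in hand, I can simply apply Lemma \ref{BBF false positive rate} with $N$ replaced by $MN$, which immediately gives the middle expression
\begin{equation*}
Pr\{FP_{RBBF}\} \approx \bigl(1 - e^{-kMN(1-r_d-(2r_d-1)(1-e^{-kMNr_d/m})^k)/m}\bigr)^k .
\end{equation*}

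For the closed-form upper bound $(1-e^{-kMNr_d/m})^k$, I would argue structurally rather than by calculus: over the full interval $\Gamma$, the remembering array $B2$ can hold at most the distinct elements ever inserted into any selecting array $B1$, and there are at most $MN r_d$ such distinct elements in total. The false positive rate of $B2$ is therefore bounded above by the classical Bloom filter rate for $MN r_d$ inserted items into $m$ bits with $k$ hashes, namely $(1-(1-1/m)^{kMN r_d})^k \le (1-e^{-kMN r_d/m})^k$. The chain of inequalities then gives the stated bound.

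The main obstacle I anticipate is justifying the OR-equivalence rigorously. Broder and Mitzenmacher's estimate quoted in the text only bounds the probability that a \emph{single} bit is set in the union, not the joint distribution of the entire OR-summary versus a freshly populated BBF. The two are not literally identical (for example, an element duplicated across two sub-windows would set $B2$ bits twice in the combined view but only once in a single BBF fed the concatenation), but these discrepancies can only increase the number of $B2$-bits in the true RBBF case is actually the same as or dominated by the single-BBF case, preserving the inequality direction. I would make this monotonicity argument explicit: the set of positions set in $\bigvee_i B2^{(i)}$ is a subset of the positions that would be set by a single BBF of capacity $MN$, since every remembered element in some sub-window is a duplicate within the combined stream. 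That inclusion is what carries the Lemma \ref{BBF false positive rate} bound through to the corollary.
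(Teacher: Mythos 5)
Your proposal is correct and follows essentially the same route as the paper: the paper's own proof simply says that treating the $M$ sub-windows as a whole reduces the claim to Lemma \ref{BBF false positive rate} with $N$ replaced by $MN$, which is exactly your reduction, and you merely supply the details (the OR-monotonicity argument for no false negatives, the subset relation $\bigvee_i B2^{(i)} \subseteq B2$ of the combined BBF, and the structural bound of at most $MNr_d$ distinct items in $B2$) that the paper leaves implicit. Your structural derivation of the upper bound $(1-e^{-kMNr_d/m})^k$ is in fact slightly more robust than a term-by-term comparison of the two displayed expressions, which would implicitly need $r_d \ge 1/2$, but this does not change the approach.
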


\begin{proof}

Suppose that Round-robin Buddy Bloom Filters are composed by $M$ single BBF, each BBF with two array of $m$ bits
and a group of $k$ hash functions,  is used to detect the duplicate items of a live data stream
with each window covering on average of $N$ elements and the ratio of distinct elements is $r_d$.
If we treat the $M$ sub-windows as a whole, the result is obvious.

%The expression on the right s minimized when $k = m \frac{ln2}{n}$,
%in which case the error rate is $Pr(FP)_{min} = (1 - \frac{1}{2})^k = (0.6185)^{\frac{m}{n}}$, where
%$m$ is the number of bits in Bloom Filter.
%The entire Round-robin Buddy Bloom Filter is representing approximate $M$ times of the above result.
%Thus the summary BBF over all $M$ BBF has an false positive rate when detect duplication is

\end{proof}

\begin{thm}
(Space Requirement). Given that $m$ and $k$ are constants, processing each data stream element needs O(1) time,
 independent of the size of the space and the stream.
\end{thm}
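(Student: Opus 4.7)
The plan is to decompose the per-element processing into a small, fixed list of primitive operations and verify that each takes constant time under the stated assumptions, then handle the one non-obvious piece, namely the round-robin transition between sub-windows.

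First I would enumerate the work done when a new flow element $f$ arrives. According to the insertion procedure described in Section II.B, we (i) compute the $k$ hash values $h_1(f),\dots,h_k(f)$; (ii) probe the $k$ bits of the remembering array B2 (or, if we maintain a bitwise summary as in Figure \ref{fig:bitwise summary}, the $k$ bits of the summary) and return immediately if all are set; (iii) otherwise probe the $k$ bits of the selecting array B1 of the current BBF; (iv) set at most $k$ bits in B1, and, if $f$ was already marked in B1, set at most $k$ bits in B2. Under the standard assumption that each hash function evaluates in $O(1)$ time on a machine word and each bit read/write is $O(1)$, the total cost is at most $c\cdot k$ word operations for a small constant $c$. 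Since $k$ is a constant by hypothesis, this is $O(1)$ per element, independent of $N$, $m$, and the length of the stream.

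Next I would address the bookkeeping that happens only at window boundaries: when the current sub-window expires, the oldest BBF is cleared and reused, and if a precomputed OR-summary over the historical BBFs is kept, it must be refreshed. Clearing one BBF costs $O(m)$ and rebuilding the summary costs $O(Mm)$, both of which are independent of the per-element count. Because the boundary cost is paid once per sub-window of $N$ elements, an amortized analysis gives $O((M m)/N)$ extra work per element; since $m$, $M$ are constants in the statement, this is again $O(1)$. Alternatively, the clearing can be charged to a background housekeeping task and is not on the critical path of stream processing, which also suffices for the claim. Either way, the amortized per-element cost remains $O(1)$.

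The main obstacle I anticipate is not the bound itself, which follows directly once $m$ and $k$ are fixed, but making the amortization for the round-robin step precise and ensuring that the query on the summary BBF is really a single $O(k)$ bit comparison rather than an $O(Mk)$ scan across all historical BBFs. I would explicitly invoke the bitwise-OR property of BBFs noted in Section II.B so that the historical query collapses to one lookup on the summary, and I would state the amortization assumption (constant $M$, constant $N \ge 1$) that makes the window-rollover work vanish into the $O(1)$ bound. With those two points nailed down, the theorem reduces to counting a constant number of hash evaluations and bit accesses per arrival.
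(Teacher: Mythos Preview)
Your argument is correct and in fact substantially more careful than anything the paper offers, but it does not match what the paper proves under this theorem. Notice the mismatch in the paper itself: the theorem is labelled \emph{Space Requirement} yet its body is verbatim the same sentence as the subsequent \emph{Time Complexity} theorem. The paper's proof ignores the stated $O(1)$-time claim entirely and instead counts bit arrays: $M$ round-robin BBFs plus one summary BBF, each consisting of two arrays, for a total of $2(M+1)$ arrays of $m$ bits. That is the whole content of the paper's proof here.

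So you have proved the statement as written, while the paper proves the statement the authors evidently \emph{meant} to write (a space bound). Your decomposition into hash evaluations, bit probes, and amortized window-rollover cost is the right way to justify the $O(1)$ per-element time; the paper defers that to the next theorem and dispatches it in two lines without the amortization discussion you give. If your goal is to align with the paper, replace your time analysis with the one-line bit-array count $2(M+1)\cdot m$; if your goal is to prove the theorem as stated, your proposal already does so and is more rigorous than the paper's own treatment of time complexity.
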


\begin{proof}
% ¿Õ¼äÒªÇó
From the discussion above, we already know that the space that this data structure need is totally $2*(M+1)$ bit arrays.
Of these bit arrays, one is used for the summary BBF. Except for the current time window used in BBF,
the other $M-1$ BBF is used in an round-robin strategy.  Every BBF has 2 bit arrays, so a total of $2*(M+1)$ space is needed.
\end{proof}

%¼ÆË㸴ÔÓ¶È
%The compute complex of Round-robin Buddy Bloom Filter is

Similar with original Bloom Filter, the time complexity of RBBF are constants.

\begin{thm}
(Time Complexity). Given that $m$ and $k$ are constants, processing each data stream element needs O(1) time,
 independent of the size of the space and the stream.
\end{thm}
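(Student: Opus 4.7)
The plan is to enumerate the atomic operations that the RBBF algorithm performs for each arriving flow and to bound each by a function of $k$ and $m$ alone. Since the theorem asserts that $m$ and $k$ are constants, it then suffices to show that per-element processing decomposes into a fixed number of sub-tasks, each of which takes $O(k)$ or at worst amortized $O(m)$ time, and is independent of the stream length and of the total number $M$ of sub-windows (which is also a fixed design parameter).

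First, I would walk through the arrival of a single element $f$ under the procedure described in Section 2.1. It incurs four pieces of work: $k$ hash evaluations $h_1(f),\ldots,h_k(f)$, which are $O(k)$ under the standard assumption of $O(1)$ hashing; $k$ bit-probes into the remembering array $B2$ of the summary BBF to decide whether $f$ is already flagged as a duplicate; if not, $k$ bit-probes and writes into the selecting array $B1$ of the current window's BBF; and conditionally, $k$ bit-writes into the current window's $B2$, together with $k$ OR-updates into the summary $B2$ if we maintain the summary incrementally. Summing these, the per-element work is $O(k)$, with no dependence on the size of the stream seen so far or on the total allocated space.

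Second, I would address the round-robin housekeeping at sub-window boundaries. Zeroing the oldest BBF costs $O(m)$ and, if one prefers a periodic rebuild, recomputing the summary as the bitwise OR of the $M$ retained BBFs costs $O(Mm)$. This is not per-element work, so I would either (a) amortize it over the $N$ elements that accumulated in the completed sub-window, contributing $O(Mm/N)$ to the amortized per-element cost, which is $O(1)$ when $m$ and $M$ are constants; or (b) maintain the summary incrementally as noted above, in which case the window jump only needs to reset the oldest array, again amortized over its $N$ elements. The main obstacle is precisely this window-jump bookkeeping, since naively it violates the worst-case $O(1)$ claim; my proof would therefore state the amortization explicitly, flag the implicit assumption that $M$ is a constant (justified because $M$ is chosen once from the required tolerance to boundary effect and does not grow with the input), and note the $O(1)$-hashing working model, after which the remaining bounds are routine.
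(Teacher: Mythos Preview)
Your proposal is correct and, in fact, considerably more careful than the paper's own argument. The paper's proof is essentially a one-line appeal to the standard Bloom Filter property: since $m$ and $k$ are constants and the hash functions are fixed, each insert/query touches a constant number of bits in constant time, hence $O(1)$. It does not separately enumerate the hash evaluations, the probes into $B1$, $B2$, and the summary arrays, nor does it address the window-jump housekeeping or the amortization of the $O(m)$ reset and $O(Mm)$ summary rebuild over the $N$ elements of a sub-window.

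What your approach buys is rigor on exactly the point the paper glosses over: the round-robin bookkeeping is not per-element work, and without your amortization argument (or your incremental-summary alternative) the worst-case per-element bound is not obviously $O(1)$. You also make explicit the hidden assumption that $M$ is a fixed design parameter, which the paper treats as self-evident. The paper's version is shorter and adequate for the intended audience, but yours is the argument one would want if the claim were challenged.
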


\begin{proof}
As for original Bloom Filter,
if $m$,$k$ are constants,  and for the specified hash functions,
the time usage of preparing and processing data is constant,
which means there has no impact on the processing time.
So, the time complexity is $O(1)$.

%Based on the discussion of parameter settings, we know
%that the selection of K is insensitive to m. Furthermore,
%the value of m and the stream size have little impact on the
%selection of Max based on our testing on Eq. 16. Therefore,
%our algorithm needs O(1) time per element, independent of
%the size of space.
\end{proof}

\section{Service Nodes Detection Algorithm}

% ÕâÒ»¶ÎÔÙÀ©Õ¹Ò»Ï£¬¼òµ¥ËµËµ¼ÓÈë
The procedure of service nodes discovery can be summarized as Figure \ref{fig:cbbf-detect}.
Flows come and are processed by Algorithm \ref{alg:DetectDupFlow}, the flow-level duplicate detection.
The conjugate flows are remembered and their tuples are divided into two parts,
\{source address, source port, protocol\} and \{destination address, destination port, protocol\}.
Each part can be viewed as a candidate of service nodes. Then the end-node level duplicate detection by Algorithm \ref{alg:DetectDupNode} is
performed, the repeated occurring tuples \{ip, port, proto\} are of concern.
The procedure is shown in Figure \ref{fig:cbbf-detect}.

\begin{figure}[htbp]
\begin{center}
  % Requires \usepackage{graphicx}
  \includegraphics[width=3.0in]{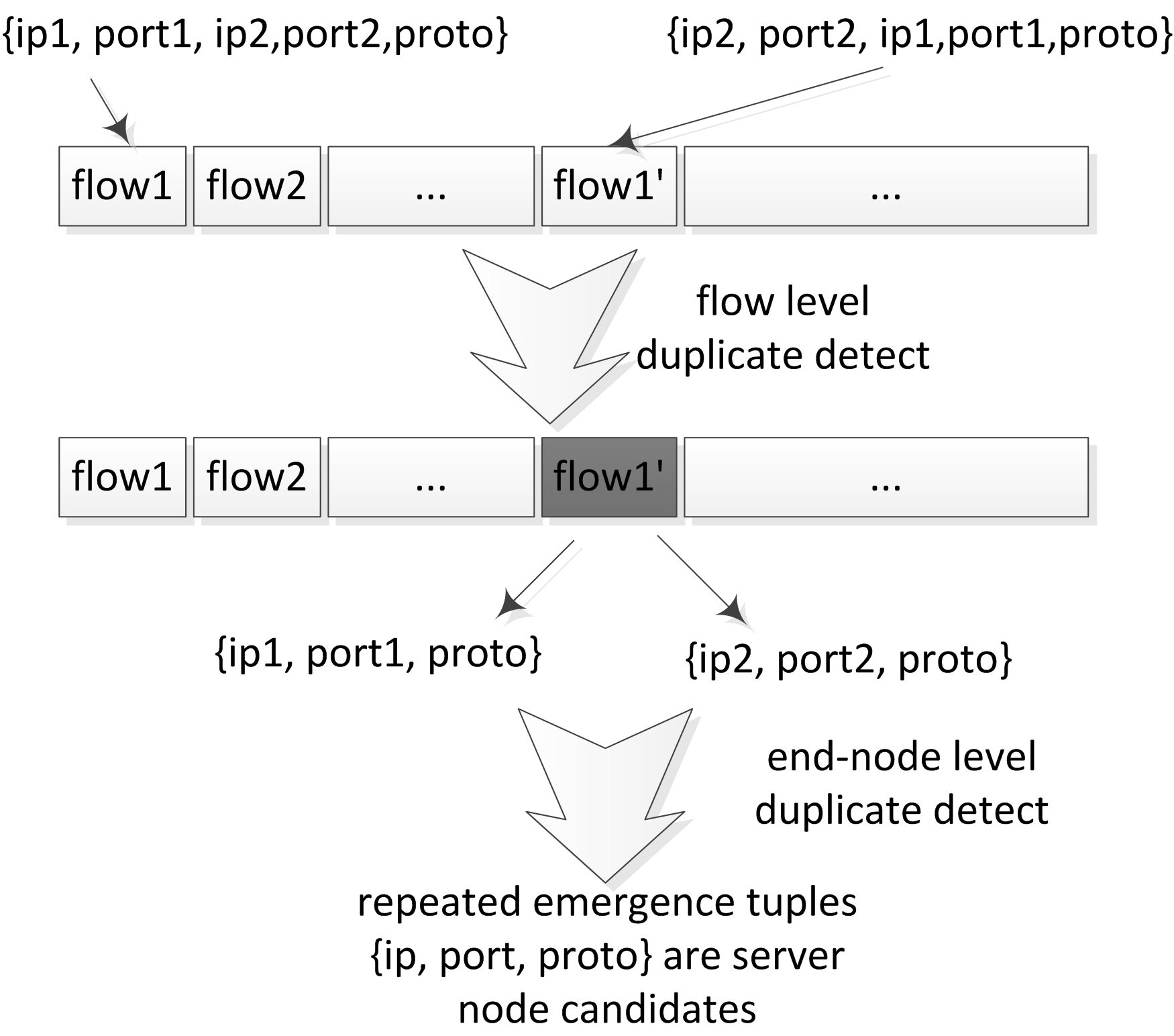}
  \caption{Procedure of service nodes Discovery}\label{fig:cbbf-detect}
\end{center}
\end{figure}

Firstly we must eliminate the unwanted duplication in traffic flows.
As we have mentioned before, NetFlow may break a logic flow into separate flows,
the end node tuple can be seen twice in bidirectional flow.
Every unidirectional flow is formed by different combinations of key
\{source ip address, source port number, protocol, destination ip address, destination port number\}.
Thus if an element of flow key is hashed by $k$ hash functions using the unique key combination,
$k$ bits in a bit array will be chosen according to the hash functions $h_i(x) (1 \leq i \leq k)$.
The position of selected bits are prepared for query or insertion operation.

\begin{algorithm}
\caption{Round-robin schema of  Buddy Bloom Filters(RBBF)}
\label{alg:RoundrobinBuddyBloomFilter}
\SetKwData{Index}{Index}
\SetAlgoNoLine
\LinesNumbered
\KwIn{$N$ Buddy Bloom Filters $BBF[1] \dots BBF[N]$, index of stale Buddy Bloom Filter $ind$}
\KwOut{a summary Buddy Bloom Filter of history windows}
\BlankLine
Initialize $BBF_{OUT} = 0$ \\
\For{ $j \leftarrow (1 \dots N) \cap j \neq ind $} {
    %\If{$ j = s$}{next}
    $BBF_{OUT} \longleftarrow BBF_{OUT} \bigvee BBF[j]$
}
\Return{$BBF_{OUT}$}
\end{algorithm}

When querying in the element, the proposed algorithm here permits only the element seen for the first time to be processed further.
During inserting process, flow keys of both directions are inserted into the data structure,
to eliminate the effect of artificial duplication in bidirectional flow,
by exchanging the source and destination parts accordingly.
Here we notate the after exchanging flow key by $\bar{f_i}$, conjugates with original flow key $f_i$.
Long alive flows are detected using round-robin strategy of Bloom Filters, see algorithm
\ref{alg:RoundrobinBuddyBloomFilter}.
The number of bit arrays needed for Bloom Filters
$N$ is set according to the NetFlow collection period $t_0$ ( usually 5 minutes) and the
flow time out period of NetFlow $T_{timeout}$( usually 15 minutes ), $N = \lceil T_{timeout}/ t_0 \rceil   + 1$.
%typically a group of four Bloom Filters is enough.

\begin{algorithm}
\caption{Approximately Detect Duplicates Flow Tuples using Round-robin Buddy Bloom Filters(DetectDupFlow)}
\label{alg:DetectDupFlow}
\SetKwData{Index}{Index}
\SetAlgoNoLine
\LinesNumbered
\SetKwFunction{RBBF}{RBBF}
\SetKwFunction{BFQuery}{BFQuery}
\SetKwFunction{BFInsert}{BFInsert}
\SetKwFunction{DetectDupNode}{DetectDupNode}
\KwIn{flow key tuples $F=f_1,f_2,\dots,f_n,\dots$}
\KwOut{None}
\BlankLine
$gBBF[0] \dots gBBF[N] \longleftarrow 0$ \\
\For{ $each f_i \in F$}{
    Calculate the index of stale time windows $ind$ \\
    \If { need jump to a new time window }{
        $gBBF[N] \longleftarrow \RBBF{(gBBF[0] \dots gBBF[N-1]),ind}$ \\
    }
    $ (B1,B2) \longleftarrow gBBF[ind]$ \\
    $ \bar{f_i} \longleftarrow $ conjugate flow of $f_i$\\

    \If{ $\BFQuery(f_i,gBBF[N][1]) \neq  True$}{
       \If{ $\BFQuery(f_i,B2) \neq True$ }{
            \eIf{ $\BFQuery(f_i,B1) \neq True$ }{
                $\BFInsert(f_i,B1)$ \\
                $\BFInsert(\bar{f_i},B1)$\\
                %$\tcc{Suspicious flow check}$
                \If{$\BFQuery(f_i,gBBF[N][0]) =  True$}{
                    $\BFInsert(f_i,B2)$ \\
                    $\BFInsert(\bar{f_i},B2)$ \\
                    $\DetectDupNode(f_i) $ \\
                }
            }{
                $\BFInsert(f_i,B2)$ \\
                $\BFInsert(\bar{f_i},B2)$ \\
                $\DetectDupNode(f_i) $ \\
                %\tcc{service node discovery}\\
            }
       }
    }
}

\end{algorithm}

Algorithm \ref{alg:DetectDupFlow} describes the first stage of NetFlow data processing.
The {\it BFQuery} function is to query element on specified Bloom Filter to check for existence,
it returns true when the element is in the Bloom Filter, otherwise returns False.
The subroutine {\it BFInsert} is to insert element into specified Bloom Filter.
The goal of  Algorithm \ref{alg:DetectDupFlow} is to eliminate flow level duplication.
After the first stage of processing, only those flows seen for the first time are ready for service node discovery.
An end node tuple is denoted by the key combination of \{ ip address, protocol, port number \}.
Thus for a valid flow, two end node tuples are waiting for detection,
source end node tuple and destination one. Any duplicate occurrence of an end node is a candidate of service node.
Representing flows span totally N jumping time windows,
the Round-robin of Buddy Bloom Filters scheme maintains all the duplicate service node tuples summary information,
using $2*(M+1)$ bit arrays. We notate for clarity these bit arrays as two dimensional array $B[2][M+1]$,
that the first dimension can be regarded as a Buddy Bloom Filter, specially $B[2][0]$
can be notate as $B1,B2$ when no confusion exist. $B1,B2$ is used as the current duplicate service node detection.
Each node tuple is checked first.
%Bloom Filter elements, $B_3, \dots, B_{M+2} $ as circle usage, and the last $B_{M+3}$ as summary over N time windows.
That is described in Algorithm \ref{alg:DetectDupNode}.

%%%
%Õâ¶Îα´úÂ뻹Õæ²»ºÃд

\begin{algorithm}
\caption{Approximately Detect Duplicate service nodes using Round-robin Buddy Bloom Filters(DetectDupNode)}
\label{alg:DetectDupNode}
\SetKwData{Index}{Index}
\SetAlgoNoLine
\LinesNumbered
\SetKwFunction{RBBF}{RBBF}
\SetKwFunction{BFQuery}{BFQuery}
\SetKwFunction{BFInsert}{BFInsert}
\KwIn{flow key tuples $f_i \in F$}
\KwOut{a bit array summary of duplicate service nodes}
\BlankLine
$BBF[0] \dots BBF[M] \leftarrow 0$ \\
\For{ each $ e_i \in (f_i$ associated end node tuples) }{
    Calculate the index of stale time windows $ind$ \\
    \If { need jump to a new time window }{
        $BBF[M] \longleftarrow \RBBF{(BBF[0] \dots BBF[M-1]),ind}$ \\
        Output $B2$
    }
    $ (B1,B2) \longleftarrow BBF[ind]$ \\
    \If{ $\BFQuery(e_i,BBF[M][1]) \neq  True$}{
       \If{ $\BFQuery(e_i, B2) \neq True$ }{
            \eIf{ $\BFQuery(e_i,B1) \neq True$ }{
                $\BFInsert(e_i,B1)$ \\
                \If { $\BFQuery(e_i,BBF[M][0]) = True$ }{
                    $\BFInsert(e_i,B2)$\\
                }
            }{
                $\BFInsert(e_i,B2)$\\
            }
       }
    }
}

\end{algorithm}

Our goal is to find a representative set of duplicate end-nodes.
In practice, when a flow comes, if one of the end-node tuples of flow can be seen in RBBF,
the server side and client side can be determined.
%One problem must be stated out is that the algorithm list here needs additional pass of data.
But when either of the service node tuple is seen for the first time, we cannot determinate which side ( source or destination )
is service node when no duplication exists.
Fortunately for us, a typical NetFlow system uses separated process to handle collection and analysis steps.
The collection process is responsible for decoding data from network devices and saving files into filesystem,
and the analysis process is in charge of deriving information like querying ip node or generating statistical information from files.
For efficiency consideration, some analysis tasks are handled by collection process as well,
such as total bytes, average flow size calculation and etc.% are precessed in the collect period.
This information is stored in file headers for later query or analysis.
So the algorithm proposed here can be used in collection process, %checking the service node tuples duplication.
and before a file is closed and flushed to disk, additional header record of BBF's $B_2$ bit array
can be written to data files.
The structure of a typical saved NetFlow data file is shown in Figure \ref{fig:cbbf-summary-record}.
The summary record written to file is actually the bit array of remembering stage $B2$.
Later the analysis process can read the $B2$ bit array, build up the same hash function $h_1, h_2, \dots, h_k$.
A simple query on the read in bit array can tell whether a end node tuple is already a known service node.

%ÀûÓÃsummary record´øÀ´µÄºÃ´¦
%The overhead comes from
%With a bit overhead added in collect process,

\begin{figure}[htbp]
\begin{center}
  % Requires \usepackage{graphicx}
  \includegraphics[width=4.0in]{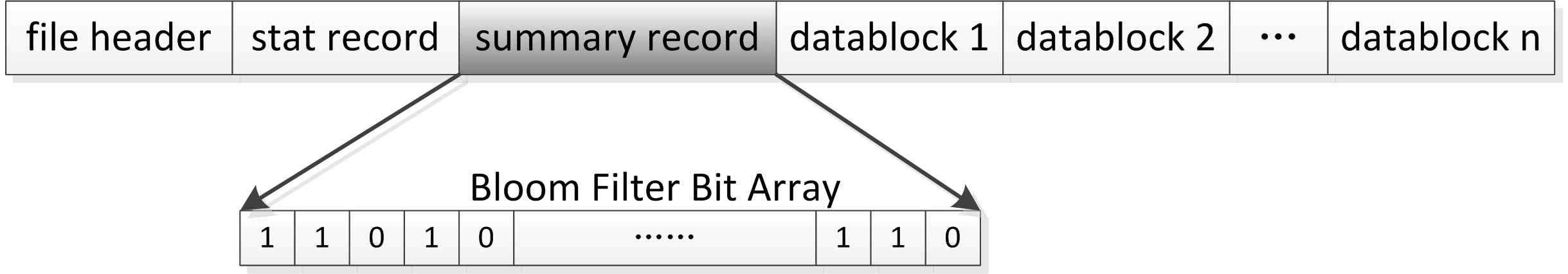}
  \caption{Store Summary Record in Data File}\label{fig:cbbf-summary-record}
\end{center}
\end{figure}

Note there is no limitation of things that can be represented in summary record's bit array.
In practice network operators often dig into raw data files to query some ip addresses.
For example, one wants to check a specific ip address a.b.c.d in daily raw data files.
In the original case, every data file should be opened and reloaded to memory, filtered by the criteria of a.b.c.d .
The matched record is remembered or output. Even the data file contains no records related to a.b.c.d ,
still needs to be checked all through.
The method of summary record can be extended to this case, if we select a Bloom Filter to represent the \{ip address\} sets.
Later query of existence procedure can be largely speeded up.
Only the header information of data files is read first. If there is no match, skip the file; otherwise deal with it like normal operation.
That can dramatically speed up the query operation.

% ÐèÒª·ÖÎö¶Ô²É¼¯´øÀ´µÄ¸ºµ£Âð£¿
%The overhead comes from
%With a bit overhead added in collect process,

\section{Evaluation and Experiments}

% µÚÒ»²¿·ÖÊýÖµ·ÖÎö£¬±È½Ï
In this section we evaluate the performance of the proposed data structure
and service node detection algorithm with real network data.
We compare the performance of the BBF algorithm with that of a hash table algorithm.
%the false positive with given memory requirements,
%Then  report
%and Patricia trie algorithm.
A prototype system based on RBBF and proposed service node detection algorithm is deployed in a large real-life campus network environment.
We also present in this section some use cases to demonstrate how this prototype system
can help security administrators and network operators with their daily tasks.

The prototype system are deployed and experiments are carried out in campus network of Peking University,
which is one of the largest university network in China.
Currently, The capacity of its links connected to internet is 3.7G to IPv4 network, 1G to IPv6 network.
The core and edge routers and switches are all configured to support NetFlow export.
% ˵Ã÷ΪʲôҪѡÔñIPv6Á÷Á¿Êý¾Ý
In this paper,
%we use the IPv6 traffic across the network boundary as the test case,
%for the reason that IPv6 is widely adopted and used but IPv6 traffic is by far studied enough by researchers.
we choose $3$  of them as data sources. CoreR1 is an IPv6-ready router in charge of the Campus' IPv6 traffic routing and forwarding.
CoreR2 is located in the campus' science building and many servers are behind this router. EdegR3 is the edge router of the university,
and all ipv4 traffic to internet is forwarded by this router.
All these flow sources are export NetFlow data using Cisco NetFlow\cite{CiscoNF} version 9.

\subsection{ Evaluation of BBF }

%In this part, we will first compare the hash table scheme with RBBF through numerical calculations.
%The setting of our numerical study is the following:

Suppose we want to detect duplication in flows. For IPv6 compatibility, the content of a record is composed
of 128-bit SA(source Address), 128-bit DA(Destination Address), 16-bit SP(Source Port), 16-bit DP(Destination Port),  and 8-bit PROT(Protocol)
$296$ bits in total. To force the alignment of bits in operating system, each flow record actually uses $320$ bits.

A naive approach is to directly track all the flows, i.e., by maintaining a counter for each flow.
Flow and it's conjugate form are deemed as having the same key combinations.
Whenever a counter is larger than one, the flow it belongs to is deemed as a duplicate element.
Usually a hash table is used to store the tuples content and a few auxiliary bytes such as integer counter, etc.
A hash table implementation is an accurate way  to detect duplication. The memory requirement
depends on distinct elements of flows. Figure \ref{fig:uniflows} shows the distinct flows from
these NetFlow data sources.

% ²úÉúͼ±í£º ºáÖáÊÇʱ¼ä£¬×ÝÖáÊÇ$r_d$¡¢Á÷Êý¡¢
\begin{figure}[htbp]
\begin{center}
 \subfigure[Distinct Flows]{
    \label{fig:uniflows}
    \includegraphics[width=3.0in]{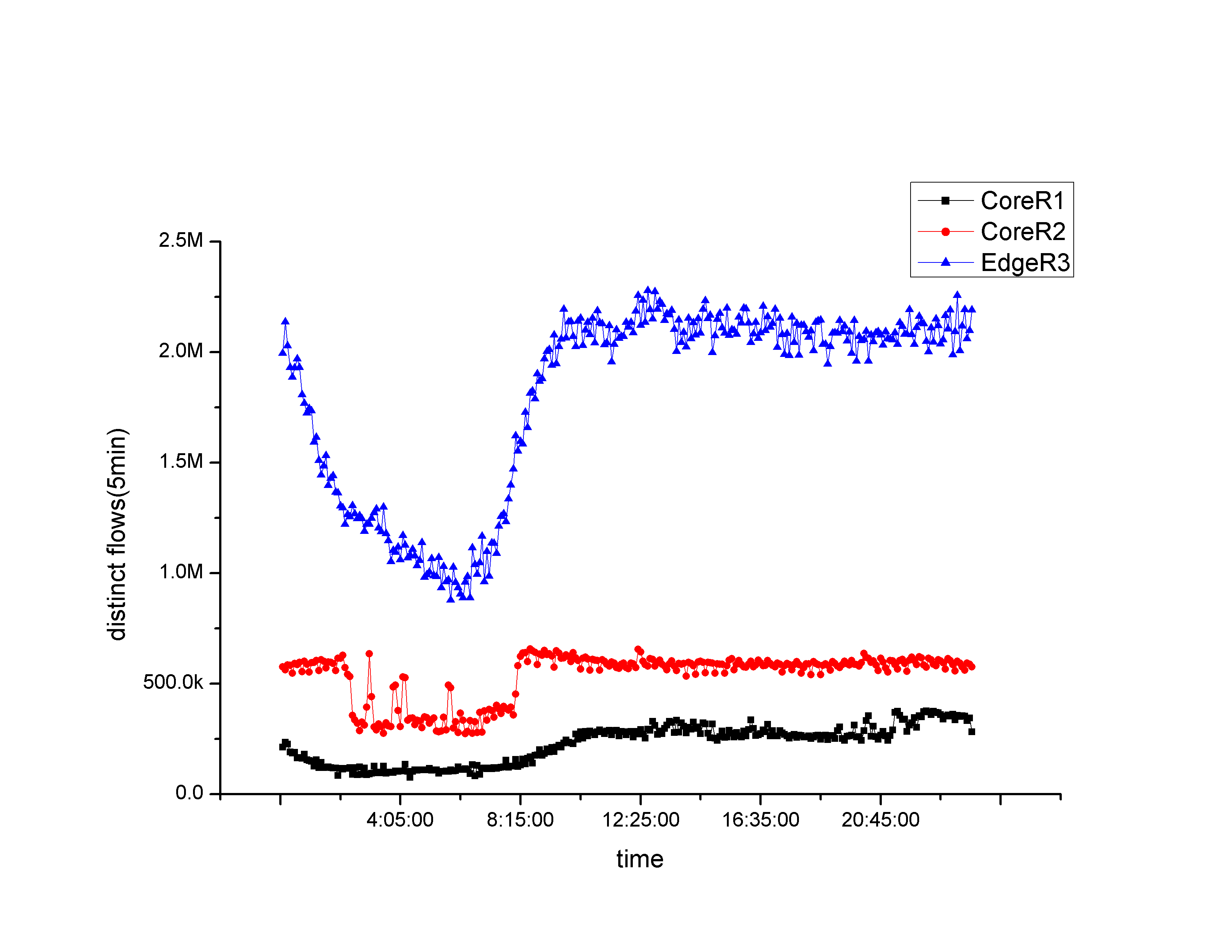}
    }
 %\hspace{1in}
 \subfigure[Distinct Elements Ratio] {
    \label{fig:r_d}
    \includegraphics[width=3.0in]{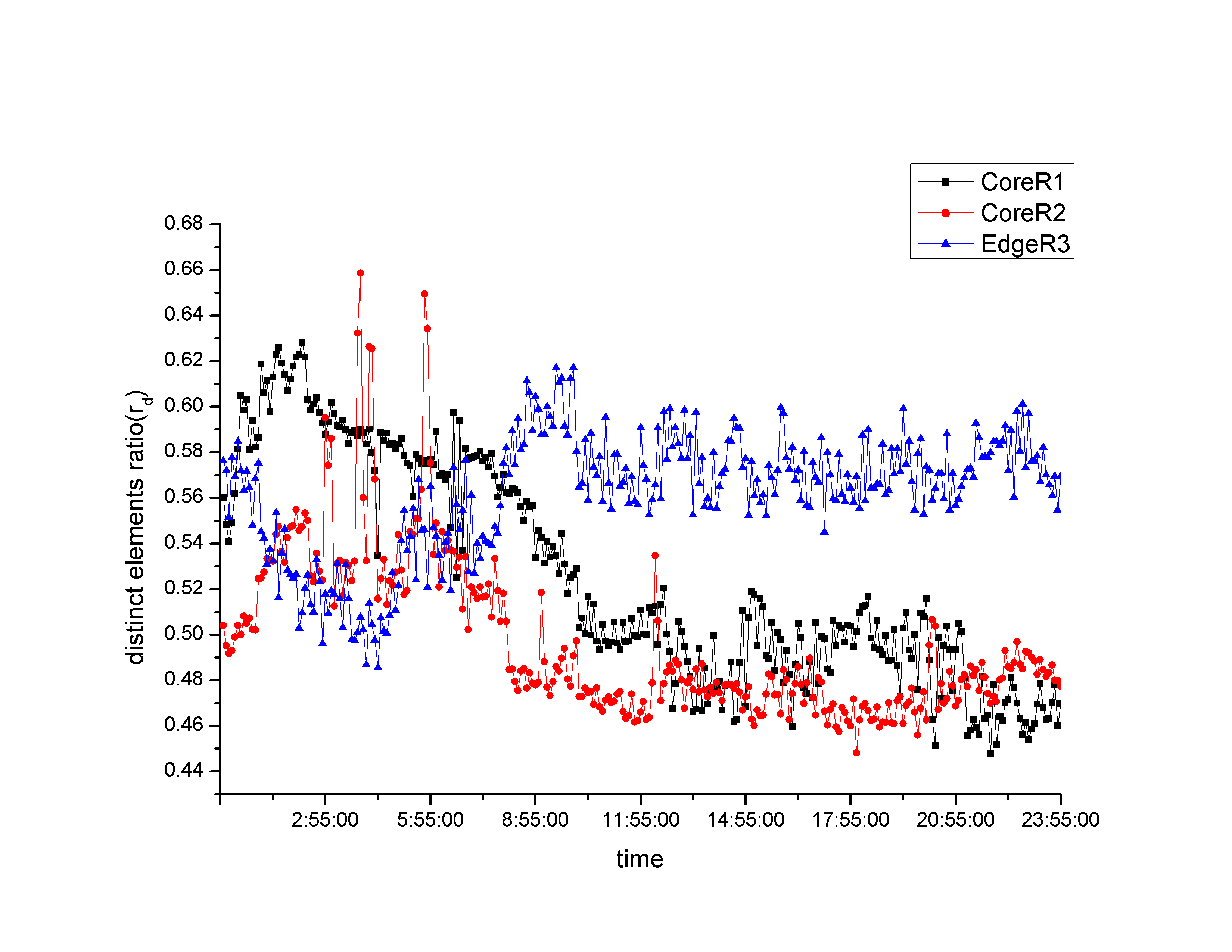}
    }
 \caption{Distinct Elements of Three NetFlow Sources}
\end{center}
\end{figure}

One can simply estimate the memory requirement for hash table is at least $ 2.5M * 320 = 800M $ bits in the worst case.
In practice, hash table cannot be full, otherwise the collision will reduce the performance dramatically.
The space allocated for hash table in this case is about $300M \thicksim 500M$ Bytes, here $1 Byte=8 bits$.

According to equation \ref{eqn:k m opt},  we want to keep the false positive rate below the to threshold $ \varepsilon = 5\%$.
We choose $5$ different hash functions for BBF,\{sax, sdbm, bernstein, elf, fnv \}.
These hash functions are simple and easy to be implemented. For computation efficiency and uniform randomize of hash functions choose, % Õâ¾ä»°Òª¸Äһϣ¬±ðÈ˶Á²»¶®
some discussion can be found in \cite{hashfunction}.%Õâ·½ÃæµÄÑо¿ÎÄÏ×£¿
The optimized number of bit array length of BBF is $ \lceil 2*log_2e*log(1/0.05,2) * nr_d \rceil \approx 14 nr_d$.
Thus memory requirement of BBF is $ 14*2.5M /8 = 4.375M$ Bytes.

Compared to the hash table, the BBF requires less memory space.
Attacks such as DDoS or port/host scan generates large amount of distinct unidirectional flows,
which may soon  fill up the hash table, and cause a lot of collisions in hash table.
This situation has a huge impact on the performance.
But for BBF, more distinct elements than expected increase FP a little without reducing performance.
To avoid the extreme cases, some extra space is allocated for BBF. The preferred parameters
 are listed in Table \ref{tbl:experiment parameters}.

\begin{table}[!h]\label{tbl:experiment parameters}
\tabcolsep 5mm \caption{Parameters of 3 Flow Sources}
\begin{center}
\begin{tabular}{|l|c|c|c|}
\hline { }&{Flows(5mins)}& {$r_d$(avg)} & {m}(Bytes)
\\ \hline
 {CoreR1} & $130K \thicksim 850K$ & $0.45$  &  4M  \\
 {CoreR2} & $500K \thicksim 1.5M$ & $0.49$  &  4M \\
 {EdgeR3} & $1.7M \thicksim 4.5M$ & $0.53$  & 8M \\
\hline
\end{tabular}
\end{center}
\end{table}

We note that the distinct ratio of flows almost stay unchanged in day time. A sharp peak always means
abnormality in network, such as scan, DDoS attack, off-line of servers, etc.
Figure \ref{fig:r_d} shows flows of CoreR2 have some attacks at 2:00AM-6:00AM, which
is confirmed by further inspection as a type of DDoS towards server farm.

\subsection{ Evaluation of RBBF }

%The experimental server that maintains a RBBF based service nodes detection algorithm
%is configured as follows:
%a 64-bit Linux operating system, a quad-core CPU running at 2 GHz,  4¡Á2GB RAM,
%4¡Á2TB hard disks attached by iSCSI storage,  with 2 gigabit network interface cards,
%one dedicate for storage access, one for NetFlow data flows collection.

% Boundary effect
Using a single BBF schema, while changing between time windows , boundary effects %\ref{fig:boundary effect}
affects the accuracy of flow duplication detection.
If a unidirectional flow is seen at time $t1$ and it's conjugate flow is seen at time $t2$,
we define the time gap as $\delta t = t1-t2 $.
Figure \ref{fig:timespan} shows time gap between bidirectional flows of 3 different NetFlow data sources.
We observe that most time gap of bidirectional flows is around $32ms \thicksim 1024ms$.
Some long lasting bidirectional flows can span to nearly $300s$, which spans the whole single time window .
We use $M$ sequence of BBFs in round-robin schema to eliminate boundary effect.

%»­¸öÖظ´Á÷µÄÏà¾àʱ³¤·Ö²¼Í¼
%
\begin{comment}
\begin{figure}[htbp]
\begin{center}
  \includegraphics[width=4.0in]{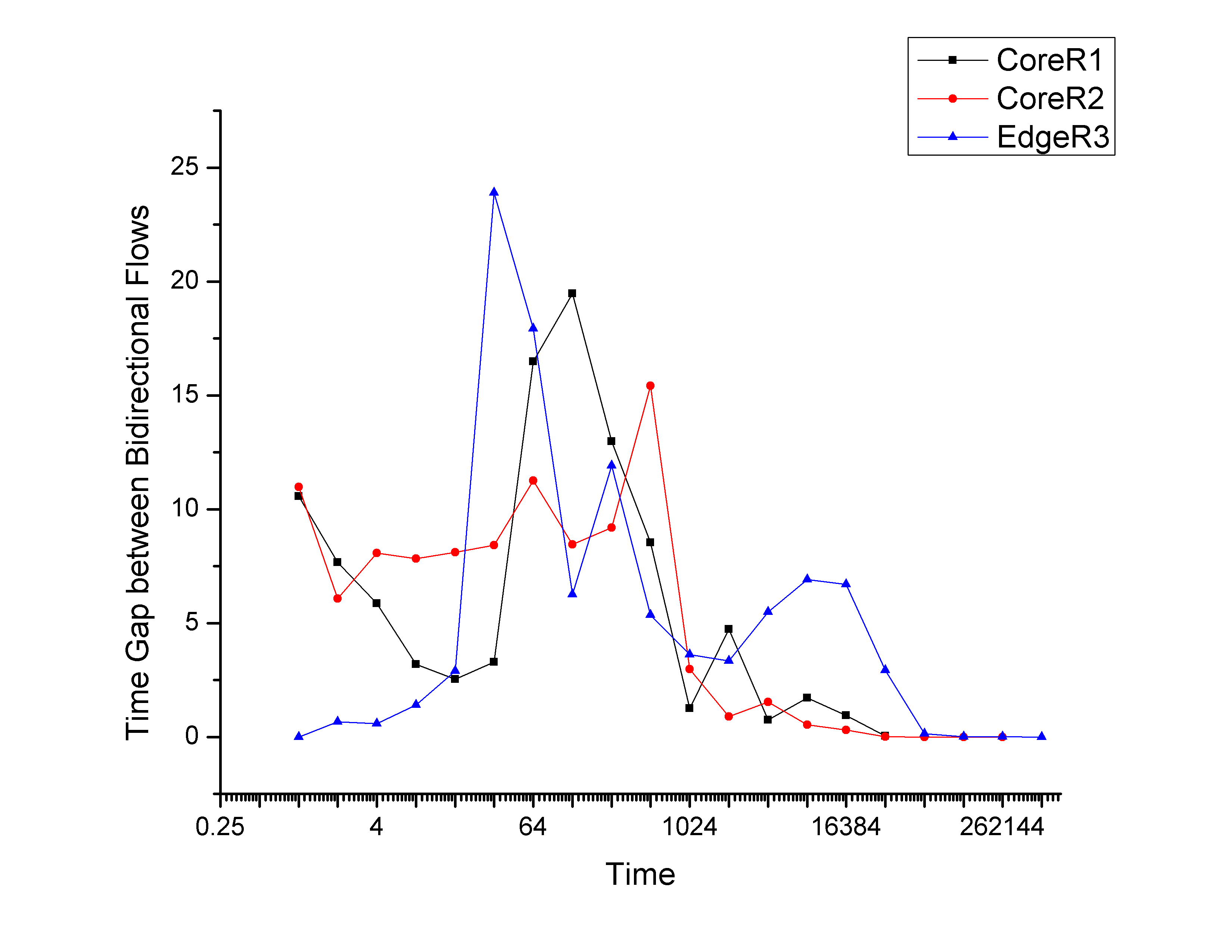}
  \caption{Time Gap of Bidirectional Flows}\label{fig:timespan}
\end{center}
\end{figure}
\end{comment}

\begin{figure}[htbp]
\begin{center}
 \subfigure[Time Gap of Bidirectional Flows]{
    \label{fig:timespan}
    \includegraphics[width=3.0in]{timespan.png}
    }
 %\hspace{1in}
 \subfigure[Rate of Coincide Nodes] {
    \label{fig:serviceDupHistory}
    \includegraphics[width=3.0in]{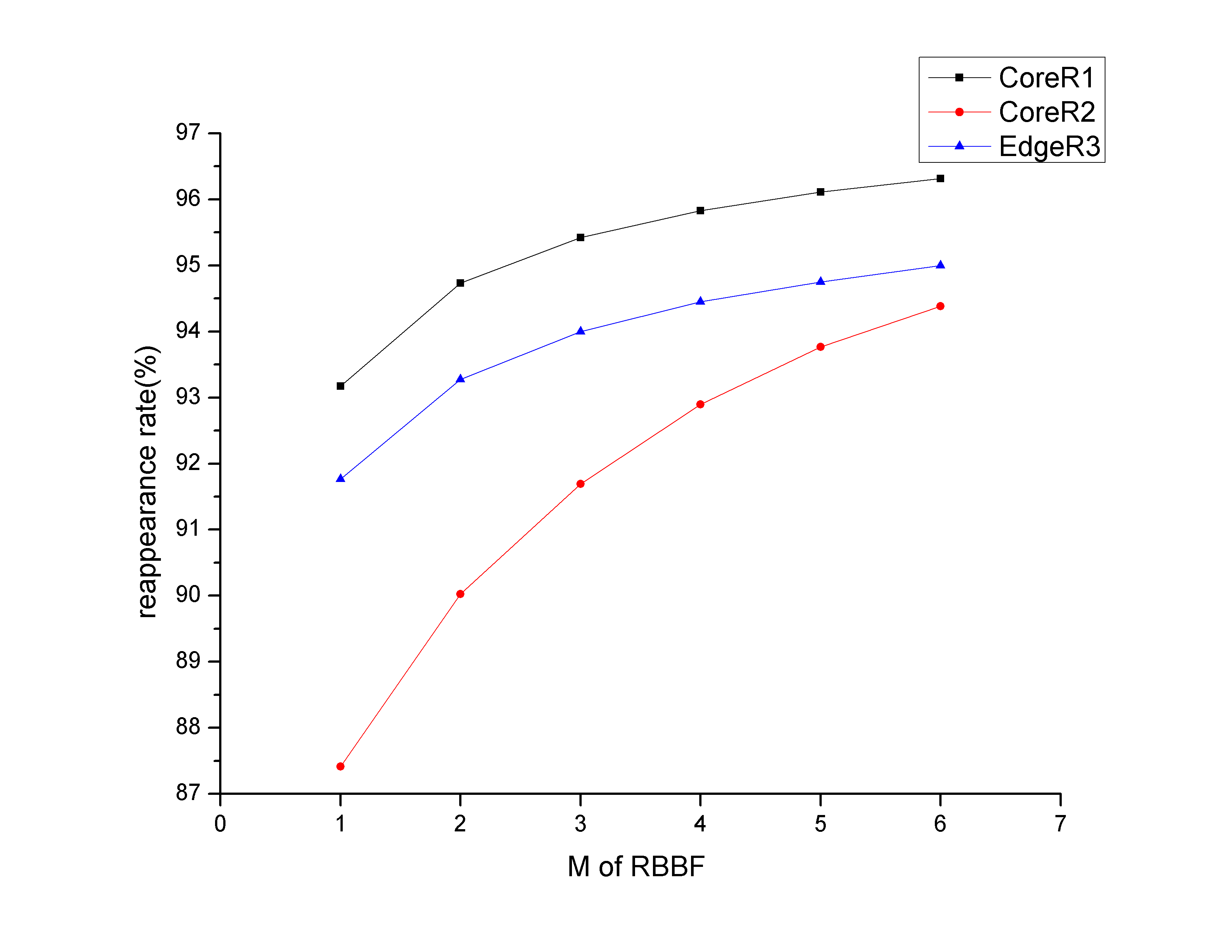}
    }
 \caption{Distinct Elements of Three NetFlow Sources}
\end{center}
\end{figure}

% if both end-node tuples can be seen, which side is service node?

Another benefit by using Round-robin Buddy Bloom Filter is seen from the fact that
most service nodes can be found again after a short period.
For example, a web server provides service at 80/TCP for a long time,
and many different clients connect to this service one after another.
Thus, using history summary BBF can improve the efficiency and  accuracy of identifying server end points.
If a service is already confirmed in history period,
we can certainly deem it as service again when we see it later.
Figure \ref{fig:serviceDupHistory} shows
the coincidence rate between currently detected service nodes and historical ones.

% RBBFµÄÊýÁ¿
We observe that more BBFs in Round-robin schema result in better
elimination of boundary effect and increase identified service nodes rate,
but surely they result in more memory usage, and increase false positive rate,
for that more distinct elements are repented by summary BBFs.
We choose $M=6$ BBFs in our experiment, partially because
the default timeout active defined in NetFlow specification is 30min\cite{CiscoNF}.

\subsection{The Prototype System: Nfsrv}

We demonstrate the proposed algorithm by our prototype system, Nfsrv.
This system is built on top of nfdump , and the front end is developed as a plugin of Nfsen \cite{nfsen}.

% ¹ØÓÚϵͳ±¾ÉíµÄͼ
\begin{figure}[htbp]
\begin{center}
  % Requires \usepackage{graphicx}
  \includegraphics[width=5.0in]{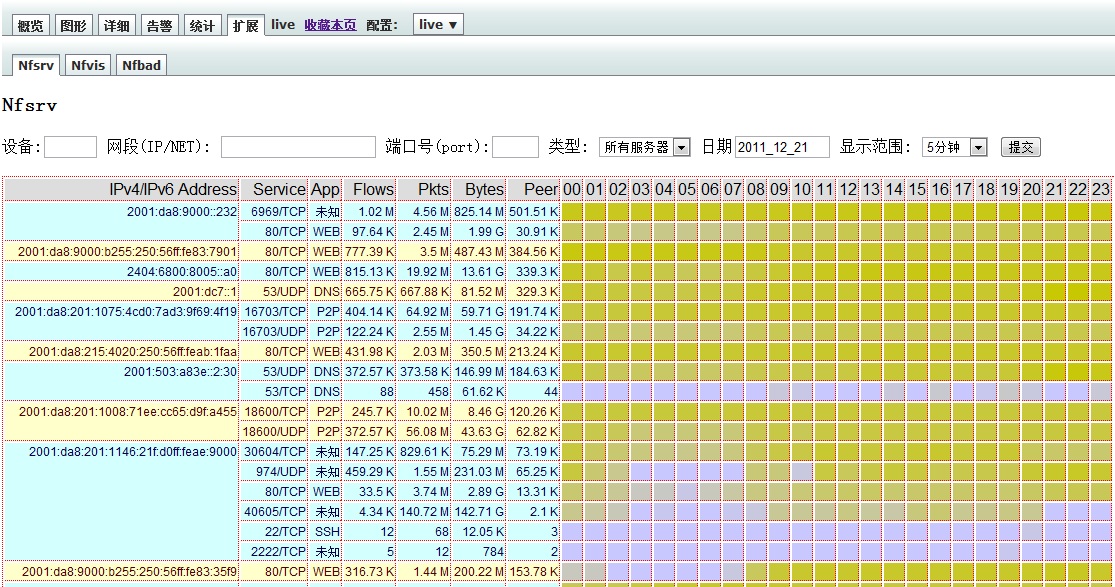}
  \caption{Web-front of Prototype service nodes Discover System: Nfsrv}\label{fig:nfsrv}
\end{center}
\end{figure}

The Figure \ref{fig:nfsrv} shows the front end of the prototype system.
From the system, one can rapidly identify the information of servers.
The proposed algorithm identifies servers that are actively used in the organization network and stores them into a database.
In the web-front of the system,  information about service nodes and some other statistics information is given.
The first column in the table shows the service node's ipv4/ipv6 address, the next column is the transport layer port number.
Other columns and statistics is beyond the discussion of this paper.
For example,the  first row in the table has the IPv6 address of [2001:da8:9000::232], which
can be verified manually as a popular PT sites in CERNET2, bt.neu6.edu.cn. This site provides web access on 80/TCP,
and bt tracker  service on 6969/TCP.
Because there is no assumption of port number in advance, even the dynamically chosen port number can be identified.
As list in the figure, some p2p users also provide service to their peers, using dynamic port number like 16703 or 18600.

The listed service nodes can be used for network monitoring or policy checking.
An example is to track whether a given ip address provide desired services.
%If the server is down for some reason, the statistics column will be different than history.
If the ip address provides more services than expected,  the web-front of Nfsrv can be used to check
and indicate the compromised using port(s). % ÕâÃû»°ÓÐÎÊÌ⣡

\begin{figure}[htbp]
\begin{center}
  % Requires \usepackage{graphicx}
  \includegraphics[width=3.0in]{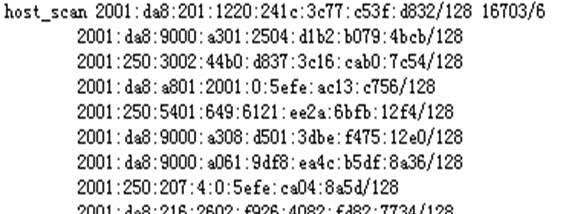}
  \caption{Suspicious Scan Activity}\label{fig:hostscan}
\end{center}
\end{figure}

The prototype system also supports associating service nodes information with  users.
This provides an effective way to  find heavy traffic users and find out what services they are using.
In some cases, hosts are often compromised by a malware that tries to spread.
Scan activity can be detected in such cases. Figure \ref{fig:hostscan} is a demonstration of
suspicious scan activity.
Security administrators can use that information to identify malicious behaviors.

Generally speaking, Nfsrv with service node discovery functionality can provide an overview representation and detail-on-demand capability
for the network administrator to understand what happens in the network. It can be efficiently used to perform forensic tasks.

\subsection{Limitations and Future Work}

As a prototype system of service nodes discovery based on RBBF,
Nfsrv provides a practical network situational awareness solution based on NetFlow flows.
We show some possible use cases where Nfsrv can help network administrators and security operators in their monitoring tasks.
However, Nfsrv has still important limitations that we plan to address in our future work.

Firstly, Nfsrv works with bidirectional flows. In an asymmetric routing environment,
request and reply flows may not pass the same router.
We assume in this case that NetFlow collectors are deployed to cover the flow path bothway.
But when this condition cannot be met, new strategy should be adopted to work with unidirectional flows.
Further more the flow sampling has an evident impact on duplicate detection.
Results from other evaluations of passive detection techniques indicate that
sampling has a limited impact on the overall accuracy.
For example, Bartlett. etc \cite{Bartlett2007}%\ref[Understanding passive and active service discovery]
reports that capturing only 16\% of the data results only in an 11\% drop in discovered servers.
However, the consequence of flow level random sampling shall be further studied.

% Õâ¶ÎдµÃ²»Í¨Ë³
The current algorithm use duplication information in NetFlow data flows to detect service nodes.
But if two end nodes  are both identified as duplicate elements,
additional information is needed such as the cardinal of connections of each end nodes.
One example is that busy client nodes, such as NAT boxes or fast scanning nodes,
may reuse the same ephemeral port in a short time,
which may cause such false positive error of server detection.
Specially ,some malicious software may imitate server behavior by repeatedly
using the same source port number for scanning or communication.
An effective approach other than naive counting method need to be further studied for these cases.
%used in these cases.
%Complete answers to this question is outside the scope of this paper.

% ¶Ô²ÉÑù¡¢·Ç¶Ô³ÆÌÖÂÛ
We also admit that Nfsrv works at the network layer and therefore heavily relies on port numbers.
Nfsrv can tag out service nodes using dynamic port numbers,
but it can be hard for a network operator to identify the application behind a service only by transport layer information.
This issue arises from the fact that some applications use random ports or hide behind well known ports.
For example some P2P applications use port 80 or port 443, which are normally reserved for web traffic, in order to evade firewall protection.
Flow-based traffic classification proves that it is possible to accurately identify applications using only NetFlow.
We currently work on developing traffic classification method based on identification of host behaviors and relationships between hosts.
We believe that discovering community of interest patterns between hosts would be an effective step to identify not
only applications type but also unveil communication style behind P2P network or botnets.

% Anonymize analysis

Finally, the current prototype system has limited ability of intrusion detection.
We plan to integrate some machine-learning approach to build up a more useful system
for network administrators and security operators .

\section{Conclusion}

%The Buddy Bloom Filter is an effective data structure to select duplicate elements from data stream.
%To deal with some massive data stream, especially for network traffic data from NetFlow,
%discrete-time interval $\Gamma$ is used. To further mitigate the boundary effect,
%each interval is partitioned into $M$  successive time windows ,
%round-robin schema usage of BBFs is proposed.

In this paper, we propose the Buddy Bloom Filter data structure to select duplicate elements
from data stream, especially network traffic data of NetFlow.
Multiple Buddy Bloom Filters with Round-robin schema are used
%method to approximately detect duplicates for streaming data. RBBF extends the Bloom Filter
to allow outdated information to be removed and mitigate the boundary effect.
To take the advantage that bitwise operation can be applied to Buddy Bloom Filter structure,
we also suggest using a summary BBF to speed up the query process on history BBFs.
The properties and analysis of the RBBF are given.

A two stage algorithm specially for discovering service nodes from NetFlow data is proposed.
Firstly the flow level duplication is eliminated and then both end node tuples
are checked for duplication as service node candidates.
%from the summary structure so as to have enough space to accommodate fresh data.
We formally and experimentally study the properties and performance of our algorithm.
We empirically evaluate it and report the conditions under which our method outperforms the alternative
methods, in particular hash table method, which needs extra space linear
in the buffer size to obtain constant or nearly constant processing time.
Timely information on what is occurring in their networks is crucial for network and security administrators.
RBBF based service nodes discovery provides a near real-time report on network activities.

In this paper, we describe the architecture and algorithm using RBBF to
discover service nodes by duplicate detection from NetFlow data,
and demonstrate a number of real use-cases using our prototype system.
Our future work includes development and evaluation of more accurate service discovery methods.
%integration of some behavior-based traffic classification method.
We also plan to add some more intrusion detection rules based on host behavior
and add some machine learning strategy to make the system more useful.

\section{Acknowledgements}

The first author would like to thank Lv Jie  at Computer Center of Peking University
for their detailed comments on this paper.
This work was partially supported by National Key Basic Research Development Program (973 Program)
under Grant No.$2009CB320505$.

\bibliographystyle{latex8}
\bibliography{latex8}

%\begin{center}
\begin{large}
\textbf{Biographies}
\end{large}
%\end{center}

\includegraphics[width=1in,clip,keepaspectratio]{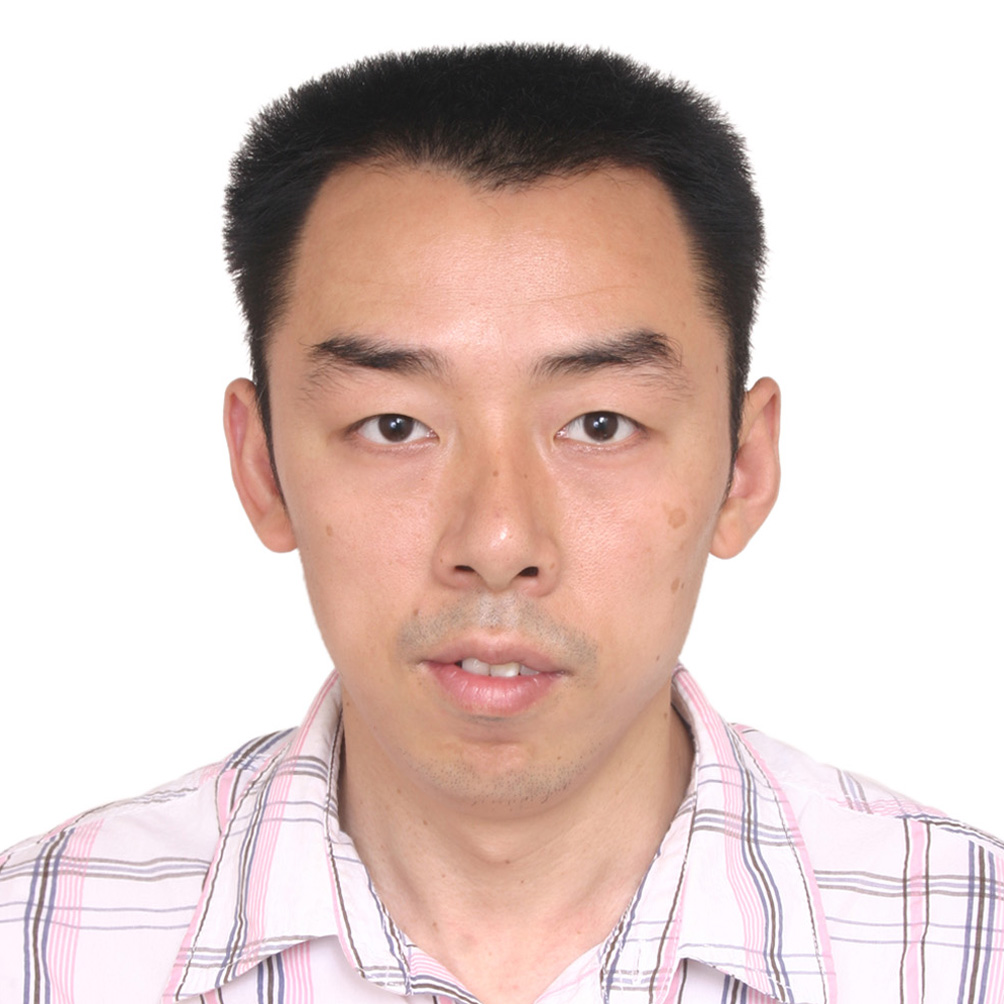}
{\noindent{\bf Zhou Changling}, received his B.S. and M.S. from Peking University.
Currently, he is working at Computer Center of Peking University. He is also a
Ph.D candidate of School of Electronics Engineering and Computer Science of Peking University.
His current research interests include network traffic analysis,
wireless and network management. Email: zclfly@pku.edu.cn.}

\vspace{5ex}

\includegraphics[width=1in,clip,keepaspectratio]{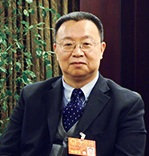}
{\noindent{\bf Xiao Jianguo}, Professor at Peking University.
Prof. Xiao is currently engaged in the research on
image processing, text mining and
web information processing .}

\vspace{10ex}

\includegraphics[width=1in,clip,keepaspectratio]{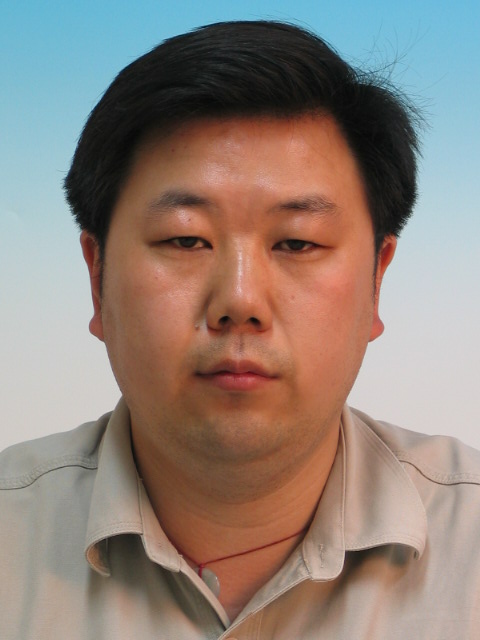}
{\noindent{\bf Cui Jian}, Associate Professor at Peking University.
His current research interests include on
network management, network architecture,
security .}

\vspace{15ex}

\includegraphics[width=1in,clip,keepaspectratio]{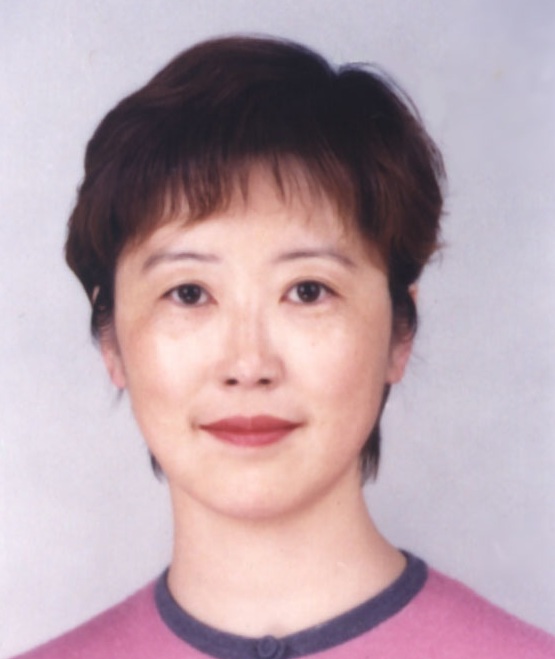}
{\noindent{\bf Zhang Bei}, Professor at Peking University.
Prof. Zhang research interests includes
network architecture, network management,
security, information management.}

\vspace{13ex}

\includegraphics[width=1in,clip,keepaspectratio]{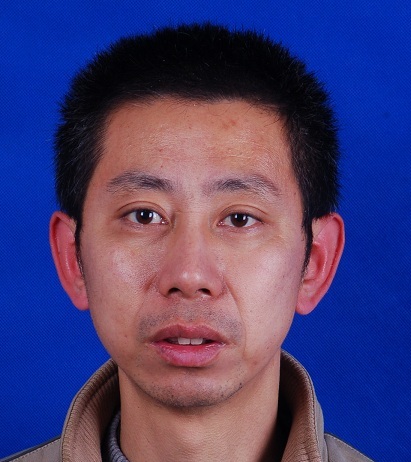}
{\noindent{\bf Li Feng}, Associate professor at Information Technology Center,
SHENYANG Academy of Governance. His current research interests include network architecture,
network management, network application }

\end{document}